\newcommand{\exclude}[1]{}
\newcommand{\showComments}{yes}
\newcommand{\note}[2]{
    \ifthenelse{\equal{\showComments}{yes}}{\textcolor{#1}{#2}}{}
}
\let\OldStatex\Statex
\renewcommand{\Statex}[1][3]{%
  \setlength\@tempdima{\algorithmicindent}%
  \OldStatex\hskip\dimexpr#1\@tempdima\relax}
\algnewcommand{\LeftComment}[1]{\OldStatex \(\triangleright\) #1}
\algnewcommand{\LineComment}[1]{\OldStatex \(\triangleright\) #1}
\newcommand{\para}[1]{{\textbf{{#1}}}}
\newcommand{\name}{{\sc{Cassini}}\xspace}
\newcommand{\namebf}{\bfseries{\scshape{Cassini}}\xspace}
\newcommand{\on}{Up\xspace}
\newcommand{\off}{Down\xspace}
\newcommand{\affinity}{Affinity\xspace}
\newcommand*\mycircle[1]{%
\scalebox{0.9}{\begin{tikzpicture}[baseline=-3pt]
  \node[draw=none,circle,inner sep=0.5pt, fill=black] {\textcolor{white}{\textsf{\textbf{#1}}}};
\end{tikzpicture}}}
\algnewcommand{\IOComment}[1]{\OldStatex \(\triangleright\) #1}
\algnewcommand{\firstLeftComment}[1]{\OldStatex \(\indent\triangleright\) #1}
\algnewcommand{\secondLeftComment}[1]{\OldStatex \(\indent\indent\triangleright\) #1}
\algnewcommand{\thirdLeftComment}[1]{\OldStatex \(\indent\indent\indent\triangleright\) #1}
\definecolor{LightCyan}{rgb}{0.88,1,1}
\definecolor{celadon}{rgb}{0.67, 0.88, 0.69}
\providecommand{\para}[1]{\smallskip\noindent\textbf{#1} }
\begin{document}
\title{\namebf: Network-Aware Job Scheduling in Machine Learning Clusters}
\pagenumbering{arabic}

\author{Sudarsanan Rajasekaran$^\dagger$ \qquad Manya Ghobadi$^\dagger$ \qquad Aditya Akella$^\ddagger$ \\\\
\vspace{0.3cm}
$^\dagger$Massachusetts Institute of Technology \qquad
$^\ddagger$UT Austin
}

\setcounter{page}{1}
\maketitle

\begin{abstract}

We present \name, a network-aware job scheduler for machine learning (ML) clusters. \name introduces a novel geometric abstraction to consider the communication pattern of different jobs while placing them on network links. To do so, \name uses an affinity graph that finds a series of time-shift values to adjust the communication phases of a subset of jobs such that the communication patterns of jobs sharing the same network link are interleaved with each other.
Experiments with 13 common ML models on a 24-server testbed demonstrate that compared to the state-of-the-art ML schedulers, \name improves the average and tail completion time of jobs by up to 1.6$\times$ and 2.5$\times$, respectively. Moreover, we show that \name reduces the number of ECN marked packets in the cluster by up to 33$\times$.

\end{abstract}
\section{Introduction}

The ever-growing increase in dataset and model sizes of deep learning has created a massive demand for efficient GPU clusters. Several studies have  demonstrated that as the number of GPUs increases, the communication overhead of distributed Machine Learning (ML) training workloads quickly takes up a significant portion of training iteration time~\cite{pipedream, sip-ml, gpipe, horovod, dcscale_ml, mudigere2021highperformance, network_evolution}. Yet, state-of-the-art ML schedulers, tend to ignore the communication pattern of ML training jobs when placing workers on GPUs.

In this paper, we develop a simple but effective approach, called \name, that integrates with existing ML schedulers to allow them to efficiently place multiple ML jobs on network links while minimizing the chances of network congestion. Our approach requires no special support, such as reservations and priorities, from switches/NICs and does not require any changes to the congestion control protocol.

\begin{figure*}[t]
    \includegraphics[width=1.0\textwidth]{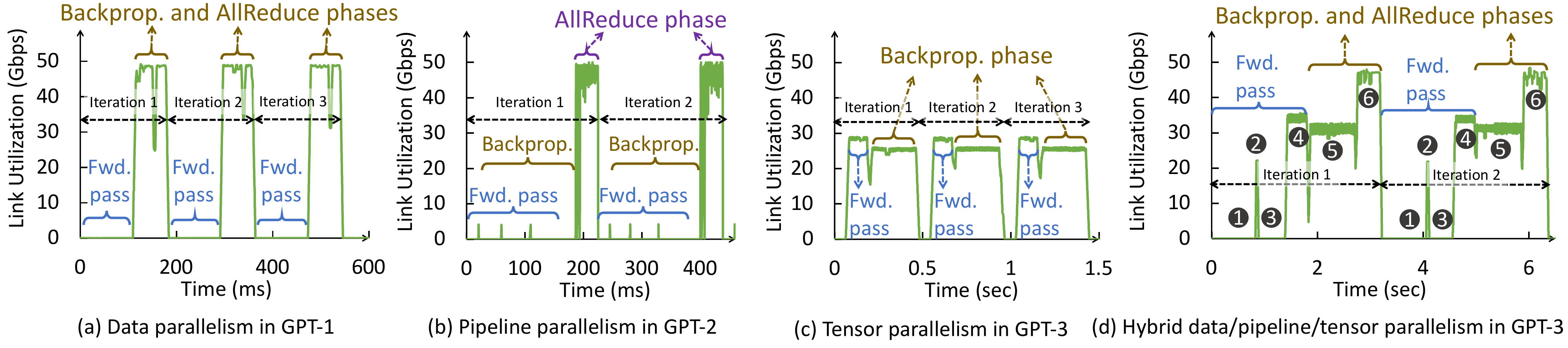}
    \caption{The traffic pattern of different parallelization strategies when training GPT-1, GPT-2, and GPT-3 models.}
    \vspace{-0.3cm}
\label{fig:on_off_pattern_parallelization_strategy}
\end{figure*}

We demonstrate that for a specific combination of jobs, introducing a small time-shift to delay the start of one of the iterations enables \name to interleave the computation and communication patterns of different jobs, thereby improving the training time. We refer to such combinations of jobs as \textit{compatible} and develop \name as a pluggable module to augment prior ML schedulers to consider a novel {\em compatibility metric} when determining where to place jobs and control how jobs compete on network links. 

Augmenting ML schedulers to take links and servers into account is inherently challenging because jobs are likely to traverse multiple links and may compete with different jobs on different links. 
To address this challenge, we propose a {\em geometric abstraction} that leverages the periodic communication pattern of Deep Neural Network (DNN) training workloads. The key idea of our abstraction is to ``roll'' time around a circle whose perimeter is proportional to the training iteration time of ML jobs. To determine the compatibility score of two (or more) jobs on a link, \name places each job on its corresponding circle and overlays the circles on top of each other. It then uses an optimization formulation to rotate the circles into a position that maximizes interleaving. The rotation angle of each job corresponds to a time-shift value to delay the start of the next immediate training iteration to achieve compatibility.

Looking beyond a single link and extending to jobs running across a topology, we generalize the geometric abstraction to cluster-level by introducing a bipartite \affinity graph where vertices are a subset of jobs and links. An edge in the \affinity graph indicates a job is traversing a link. We then use a new graph traversal algorithm to find unique time-shifts for all jobs while maintaining their compatibility on all links. Using our geometric abstraction and \affinity graph, we augment Themis~\cite{themis} and Pollux~\cite{pollux} with $\approx$1000 lines of code.

To evaluate \name, we build a testbed with 24 servers, each with one NVIDIA A100 GPU~\cite{a100} and one 50~Gbps RDMA NIC. Our experiments with 13 representative DNN models (VGG11~\cite{vgg11}, VGG16~\cite{vgg16}, VGG19~\cite{vgg19}, ResNet50~\cite{resnet}, WideResNet101~\cite{wideresnet}, BERT~\cite{bert}, RoBERTa~\cite{roberta}, XLM~\cite{xlm}, CamemBERT~\cite{camembert}, GPT-1~\cite{gpt_1}, GPT-2~\cite{gpt_2}, GPT-3~\cite{gpt_3}, and DLRM~\cite{dlrm}) show that \name improves the tail completion time of jobs by up to 2.2$\times$ and 2.5$\times$, compared to Themis~\cite{themis} and Pollux~\cite{pollux}, respectively. 
Moreover, we show that \name reduces the number of ECN marked packets in the cluster by up to 33$\times$. 
\section{Background and Motivation}
\label{sec:motivation}

\subsection{Distributed DNN Training Traffic Pattern}
\label{sec:traffic_pattern}

\name is designed for large GPU clusters with hundreds of training jobs distributed with data, pipeline, and/or model parallel training paradigms. 
To this end, we study the impact of different parallelization strategies on network demand using a series of measurements. Each server in our testbed has one A100 GPU and one ConnectX-5 Mellanox RDMA NIC with 50~Gbps capacity. In all our experiments, we choose batch sizes such that the GPU utilization is higher than 80\% and that intra-job pipelining is enabled.

\para{Data parallelism.} In data parallel training, the DNN model is copied into the memory of each GPU while the dataset is distributed across them. Figure~\ref{fig:on_off_pattern_parallelization_strategy}(a) shows the communication pattern of a GPT-1~\cite{gpt_1} model (12 layers, 9~GB memory) distributed across four GPU servers using data parallelism. The figure shows the traffic pattern of three back-to-back training iterations. Each iteration contains a forward pass with near-zero network demand, followed by a period of high utilization corresponding to the backpropagation and AllReduce phases.

\para{Model/Pipeline parallelism.}  
In model parallel training, the DNN model is partitioned across workers~\cite{NIPS2014, pmlr-v80-jia18a}, and parts of the DNN model are computed on different workers. The two common techniques for model parallelism are tensor parallelism and pipeline (or layer) parallelism~\cite{demystifying_ml}. In pipeline parallelism, the model is partitioned vertically at the layer boundaries~\cite{pipedream, gpipe}. Figure~\ref{fig:on_off_pattern_parallelization_strategy}(b) shows the communication pattern of a GPT-2~\cite{gpt_2} model (24 layers, 27~GB memory) distributed across two servers using pipeline parallelism. We partition the model vertically in half (i.e., server$_1$ contains layer$_1$ to layer$_{12}$ and server$_2$ contains layer$_{13}$ to layer$_{24}$) and use PipeDream's approach~\cite{pipedream} to divide the batch size into three minibatches. The three small communication peaks  during the forward pass correspond to the activation parameters of these three minibatches. The heavy communication demand following the peaks corresponds to the AllReduce operation between the embedding layers in the model.

\begin{figure*}[t]
\centering
\includegraphics[width=0.9\textwidth]{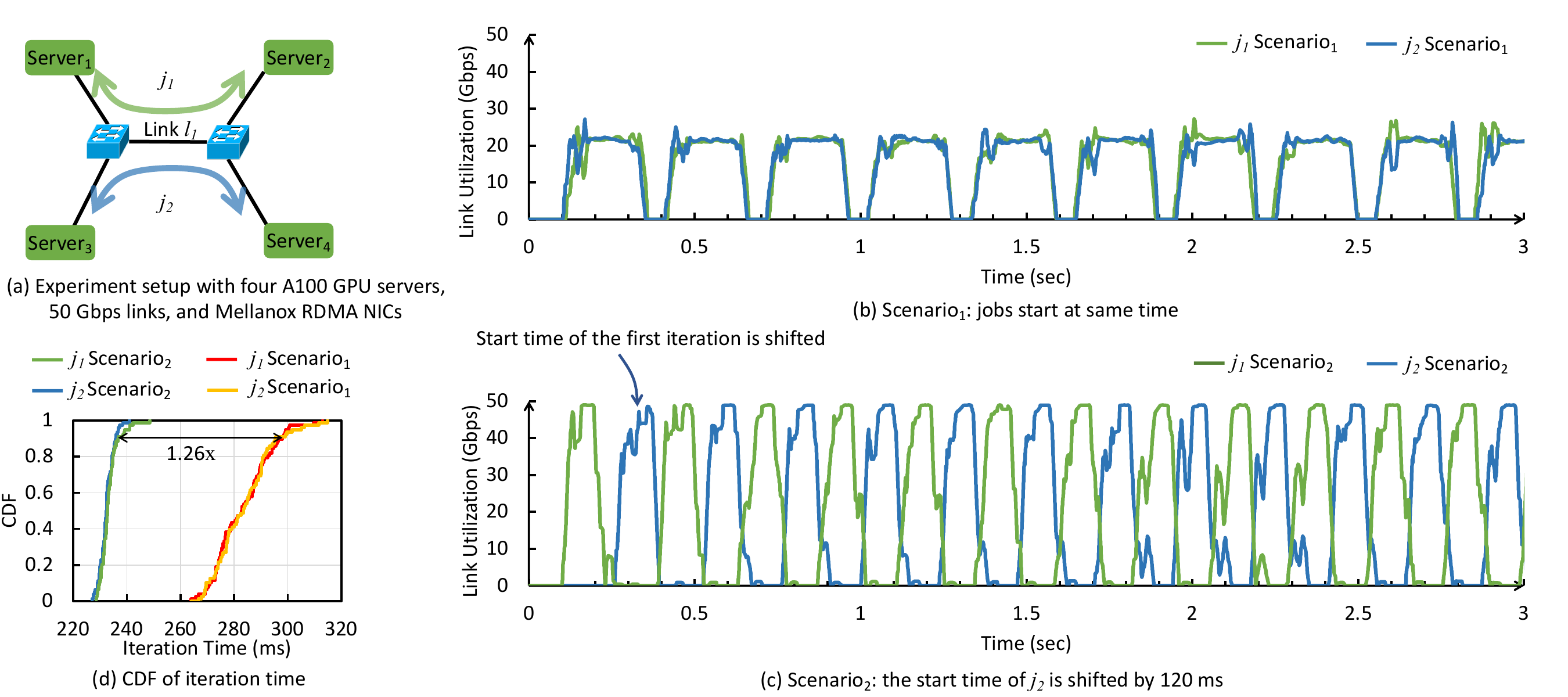}
\caption{Impact of interleaving the \on-\off phases of two VGG19 jobs sharing link $l_1$.}
\vspace{-0.4cm}
\label{fig:compatible_jobs}
\end{figure*}

\para{Model/Tensor parallelism.} Another variant of model parallel training is tensor parallelism~\cite{mesh_tensorflow, megatron_modelparallel}.
Tensor parallelism techniques partition the model horizontally such that different tensors are distributed across workers~\cite{sagemaker, tesseract}. Figure~\ref{fig:on_off_pattern_parallelization_strategy}(c) shows the communication pattern of a GPT-3~\cite{gpt_3} model (96 layers, 35~GB memory) distributed across two servers using tensor parallelism. We partition the model horizontally in half, where each server contains half of all the layers.
The figure shows that both forward and backpropagation phases introduce roughly 25~Gbps traffic followed by a short period of near-zero network demand during data loading.

\para{Hybrid data/pipeline/tensor parallelism.} Today's DNN training systems tend to use a hybrid of data/pipeline/tensor parallelism to train large DNN models~\cite{bert_blog, sip-ml, model-data-parallel, topoOpt}. Figure~\ref{fig:on_off_pattern_parallelization_strategy}(d) shows the communication pattern of a GPT-3~\cite{gpt_3} model (96 layers, 155~GB memory) distributed across eight servers using hybrid data/pipeline/tensor parallelism. We use pipeline parallelism to partition the model's layers vertically into two parts. Then, we divide the layers in each partition horizontally to obtain a total of four partitions. Next, we assign each of these four partitions to a server. Finally, we replicate the same process across another group of four servers and use data parallelism to distribute the data between these two groups of four servers. The figure shows the communication demand of the forward, backpropagation, and AllReduce phases where each phase has a different network demand.

\para{Key takeaways.} We repeat the above experiments using common DNN models, such as BERT~\cite{bert}, DLRM~\cite{dlrm}, WideResNet101~\cite{wideresnet},  RoBERTa~\cite{roberta}, VGG~\cite{vgg} and observe similar traffic patterns. Our key takeaways are: ($i$) the network demand repeats itself across all iterations, as long as the training hyper-parameters remain the same; ($ii$) the network demand of an iteration may consist of multiple \on and \off phases. The exact magnitude of the network demand during these \on and \off phases depends on the parallelization strategy and hyper-parameters. For instance, Figure~\ref{fig:on_off_pattern_parallelization_strategy}(d) shows each training iteration has six \on-\off phases, labeled as \mycircle{1} to \mycircle{6}. Section~\ref{sec:geometric} describes \name's approach to capture the duration and bandwidth of \on-\off phases.

\subsection{Interleaving the \on and \off Phases}
\label{sec:interleaving_potential}

\name's goal is to augment ML schedulers to consider the traffic demand of training jobs when making scheduling decisions. In particular, given the key takeaways in the previous section, we aim to interleave the bandwidth demand of \on and \off phases of different jobs to leverage the periodic network demand of distributed DNN training jobs.

To demonstrate the power of \on-\off network demand interleaving, consider two data parallel training jobs, $j_1$ and $j_2$, as shown in Figure~\ref{fig:compatible_jobs}(a).  Each job has one \on and one \off phase at every training iteration. We run each job for 1,000 iterations under two scenarios. In the first scenario, two VGG19~\cite{vgg} jobs start simultaneously and share $l_1$ fairly. The communication uses the RDMA-based DCQCN congestion control algorithm~\cite{dcqcn}. Figure~\ref{fig:compatible_jobs}(b) shows that both jobs achieve roughly 22~Gbps bandwidth (i.e., half of $l_1$'s capacity). In the second scenario, shown in Figure~\ref{fig:compatible_jobs}(c), we interleave the \off phase of $j_1$ with the \on phase of $j_2$ and vice versa, by shifting the start time of $j_2$ by 120~ms (Section~\ref{sec:geometric} describes how we obtained this value). In this scenario, the jobs do not compete for bandwidth during their respective \on phases, giving both jobs the entire available bandwidth.  Figure~\ref{fig:compatible_jobs}(d) plots the CDF of training iteration times for both scenarios demonstrating that scenario$_2$ accelerates the 90$^{th}$ percentile tail iteration time of both jobs by 1.26$\times$. 

Perfectly interleaving the \on and \off phases of different jobs is not always possible.
For instance, when BERT~\cite{bert} and VGG19~\cite{vgg} models share a link, no suitable time-shift can achieve perfect interleaving. But when WideResNet101~\cite{wideresnet} and VGG16~\cite{vgg} share a link, shifting VGG16 by 150~ms enables perfect interleaving.
Instead of relying on perfectly matching \on and \off phases of jobs, we define a metric called \textit{compatibility rank} that captures the potential degree of interleaving across jobs sharing the network. Section~\ref{sec:geometric}  describes a novel technique to determine the compatibility rank and the amount of required time-shift to achieve it.

\section{Geometric Abstraction}
\label{sec:geometric}

\begin{figure*}[t]
    \begin{minipage}{0.75\textwidth}
    \includegraphics[width=1.0\textwidth]{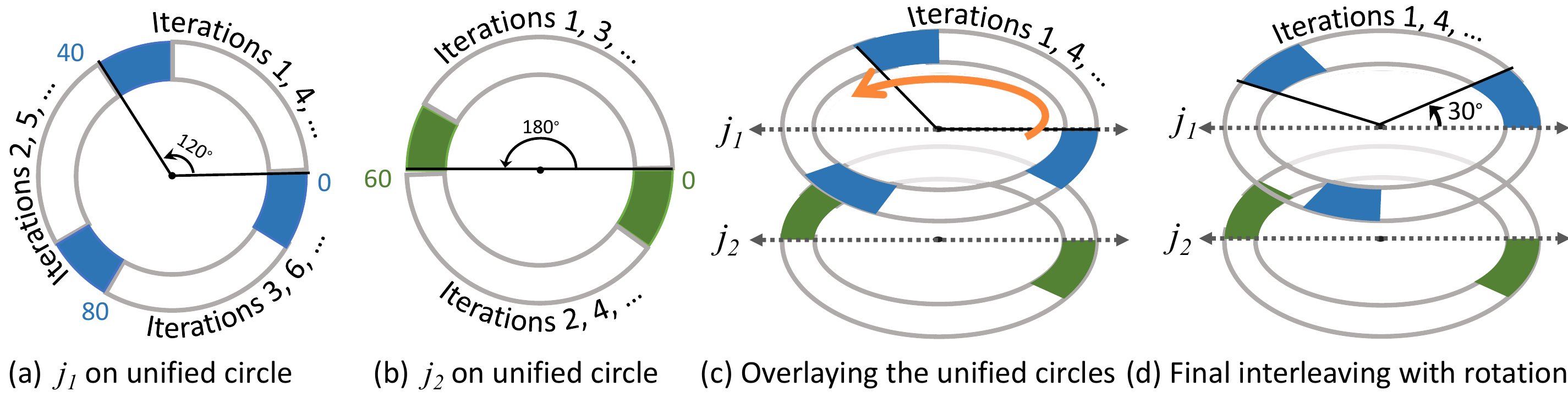}
    \caption{\name's unified circles for jobs with different iteration times.}
    \label{fig:overlapability}
    \end{minipage}
    \hspace{0.25cm}
    \begin{minipage}{0.2\textwidth}
    \includegraphics[width=1\textwidth]{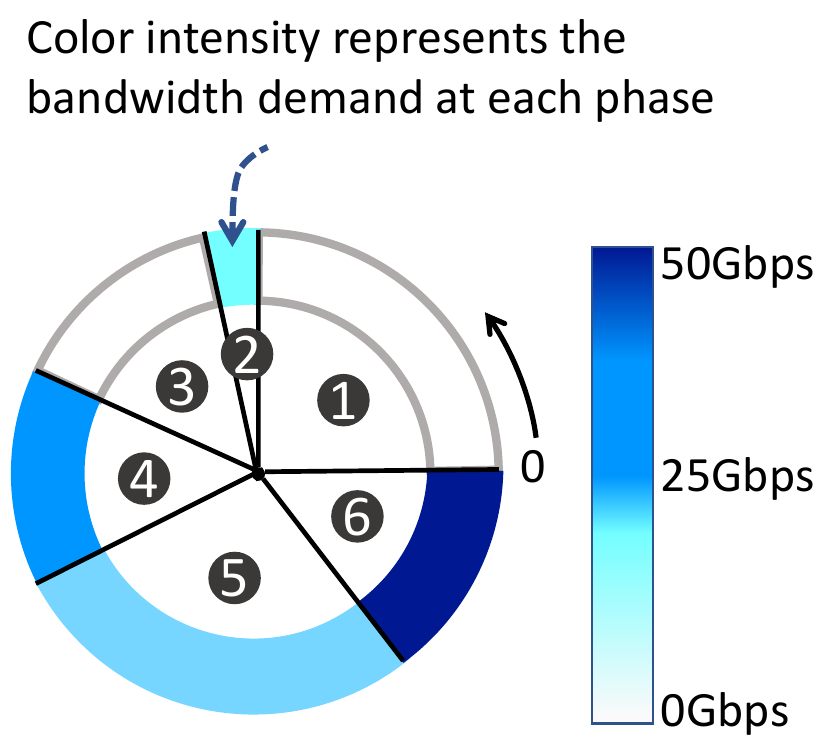}
    \caption{Geometric circle for the job in Fig.~\ref{fig:on_off_pattern_parallelization_strategy}(d).}
    \vspace{-0.4cm}
    \label{fig:model_parallel_geometric}
    \end{minipage}
\end{figure*}

Consider a time-series representation of the network demand for a job running in a dedicated cluster with no congestion. As shown in Section~\ref{sec:motivation}, different training jobs have different \on and \off patterns but the duration and bandwidth demand of the same job remain more or less the same across training iterations. The key idea of our abstraction is to \textit{roll} time around a circle whose \textit{perimeter} is equal to the  iteration time of a job. Consequently, the \on-\off phases of all iterations will appear on approximately the same angles of the circle.

Figure~\ref{fig:example_geometric}(a) illustrates the time-series network demand of a data parallel VGG16 training job with a training iteration time of 255~ms. Figure~\ref{fig:example_geometric}(b) shows a circle with perimeter 255 units where the time-series data is plotted around it. The figure demonstrates that the \on and \off phases of different iterations cover the same angles of the circle. Our geometric abstraction captures this property, as shown in Figure~\ref{fig:example_geometric}(c). The perimeter of the circle is the iteration time, set to 255 units. The \off phase spans 141 units, represented by the uncolored arc with 200$^\circ$ angle, starting at 0$^{\circ}$, on the x-axis. The \on phase represented by the colored arc occupies the remainder of the circle.

\para{Rotate the circle to interleave \off and \on phases of different jobs.} To determine the compatibility score of two (or more) jobs on a link, we place each job on its corresponding circle and overlay the circles on top of each other. Congestion happens when the total bandwidth demand of a particular angle is higher than the capacity of the link, as shown in Figure~\ref{fig:unfairness_to_geometric}(a). To find the best interleaving, we rotate the circles to a position where the summation of the bandwidth demands is less than the capacity of the link for all angles in the circle, as shown in Figure~\ref{fig:unfairness_to_geometric}(b). If such a rotation is found, the jobs are fully  compatible. 

\para{Capturing jobs with different iteration times.} The above technique only works when the perimeter of circles is the same. To generalize to the case where jobs have different iteration times, we place each job on a  \textit{unified circle} whose perimeter is equal to the Least Common Multiple (LCM) of the iteration time of all jobs competing on the link. For instance, consider two jobs $j_1$ and $j_2$ competing on a bottleneck link with iteration times 40~ms and 60~ms, respectively. To determine the compatibility score of the two jobs, we place them on a circle with a perimeter equal to $LCM(40, 60) = 120$ units. Figure~\ref{fig:overlapability}(a) shows $j_1$ on this unified circle. As the perimeter of the circle is 3$\times$ $j_1$'s iteration time, there are three periods of \on and \off phases in the figure. Similarly, Figure~\ref{fig:overlapability}(b) shows $j_2$ on the unified circle. We then overlay the unified circles on top of each other (shown in Figure~\ref{fig:overlapability}(c)) and rotate the circles to determine the compatibility score. Figure~\ref{fig:overlapability}(d) shows that by rotating $j_1$ by $\Delta=30^{\circ}$ counter-clockwise, the sum of bandwidth demands on all angles of the unified circles is lower than the link capacity,
giving these two jobs a compatibility score of 1 (i.e., fully compatible). 

\para{Capturing the bandwidth demand of model parallel training jobs.} For clarity of presentation, the examples in this section contained data parallel training jobs with one \on and one \off phase during each iteration. \name's geometric abstraction is generic and can capture more complex traffic patterns induced by various parallelization paradigms. Consider the communication pattern of the GPT-3 model with hybrid data/pipeline/tensor parallelism shown in Figure~\ref{fig:on_off_pattern_parallelization_strategy}(d). Here, GPT-3's communication pattern consists of six \on-\off phases with different durations and bandwidth demands. The geometric circle of this job contains six colored arcs where the length and color intensity of each arc corresponds to the duration and bandwidth demand of each \on-\off phase of the model, shown in Figure~\ref{fig:model_parallel_geometric}. Next, we formalize our geometric representation and show an optimization formulation that uses the geometric abstraction to find rotation angles to interleave the \on-\off phases of  multiple jobs sharing a link, irrespective of the parallelization strategy.

\para{Finding rotation angles.} Once jobs are placed on their unified circles, \name uses an optimization formulation, shown in Table~\ref{tab:formulation_phase_2}, to find the best angle of rotation for jobs to maximize their compatibility.

\para{Optimization input.} The input is a set of ML jobs $J^l=\{j\}$  competing on a link $l$. We profile each job $j$ to construct its unified circle, denoted by unified\_circle$_j$. The perimeter of the unified circle is the LCM of the iteration times of all jobs $j \in J^l$.
The data structure of unified\_circle$_j$ contains a series of bandwidth demands, $bw\_circle_{j}(\alpha)$, where $\alpha \in [0, 2\pi]$ identifies an arc of the circle that corresponds to an \on or \off phase in the communication pattern.
The total capacity of link $l$ is denoted by $C^l$.

\para{Optimization objective and output.} 
The optimization goal is to overlay the unified circles of each job and rotate them such that the excess bandwidth demand across all angles is minimized. We define the compatibility score as $score =1 - $average($Excess(demand_{\alpha}))$, where $Excess$ is the excess bandwidth demand of all jobs at a particular angle $\alpha$ (Equation~\ref{eq:opt_def_loss}). To make the score a unitless metric, we divide the average excess bandwidth by the link capacity $C^{l}$. Note that when the excess bandwidth demand is zero, the compatibility score is 1 (i.e., 100\% compatible). However, when there are many jobs with large excess bandwidth demands, it is possible for the score to become negative indicating a highly incompatible combination. The optimization objective is to maximize this compatibility score, and the output of the optimization is a rotation angle $\Delta_j^l$ for each job.

\begin{table}[tbp]
\scriptsize
\centering
\def\arraystretch{1}
\begin{tabular}{|p{0.6cm}|p{1.8cm}|p{4.9cm}|}
\hline
\multirow{7}{1.22cm}{Input}
& $J^l = \{j\}$ & Set of ML jobs $j \in J^l$ competing on link $l$. \\
& $\{$unified\_circle$_j\}$ & Set of unified circles for  $\forall j \in J$. Each circle is a data structure that contains the angles and bandwidth demand of \on or \off phases.\\
& $bw\_circle_{j}(\alpha)$ & Bandwidth demand at angle $\alpha$ on unified\_circle$_j$\\
& $r_{j}$ & Number of iterations of $j$ in its unified\_circle$_j$.\\
& $A = \{\alpha\}$ & Set of discrete angles $\alpha \in$ [0,2$\pi$]. $|A|$ denotes the number of discrete angles.\\
& $C^{l}$ & Total link capacity of link $l$.\\
\hline
\multirow{2}{1cm}{Output} & \multirow{1}{1cm}{$demand_{\alpha}$} & Total bandwidth demand at angle $\alpha$ when considering the demand of all jobs $j \in J$.\\ 
& {$\Delta_j^l$} & Rotation angle of $j \in J$ on link $l$, in radians. \\
& $score$ & Compatibility score of jobs sharing link $l$.\\
\hline
\end{tabular}
\begin{align} 
\textbf{Auxiliary definitions:} & & \nonumber \\
Excess(demand_{\alpha}) = & \begin{cases} demand_{\alpha} - C^{l} & if demand_{\alpha} > C^{l} \\
0 & otherwise
\end{cases} & & \label{eq:opt_def_loss} \\
\textbf{Maximize:}  & \  score = 1 - \frac{ \sum_{\alpha}Excess(demand_{\alpha})}{|A|C} & & \label{eq:opt_objective} \\
\textbf{Subject to:} & & & \nonumber \\
\forall \alpha : & \sum_{j} bw\_circle_j(\alpha - \Delta_j^l) \leq demand_{\alpha} & & \label{eq:opt_constraint} \\
\forall \Delta_j^l : & \  0 \leq \Delta_j^l \leq \frac{2 \pi}{r_{j}} & & \label{eq:delta_bounds}
\end{align}
\vspace{-0.4cm}
\caption{\name's optimization formulation.}
\label{tab:formulation_phase_2}
\end{table}

\para{Optimization constraints.} 
Equation~\ref{eq:opt_constraint} computes the sum of the bandwidth demands across all the jobs sharing link $l$ at a particular angle $\alpha$ on their unified circles, rotated by angle ${\Delta_{j}^{l}}$. We bound this value by the output parameter $demand_{\alpha}$. Equation~\ref{eq:delta_bounds} bounds the rotation angle $\Delta_j^l$ between 0 and $\frac{2\pi}{r_j}$ because the unified\_circle$_j$ contains $r_j$ iterations of job $j$. Hence, setting an upper limit of $\frac{2\pi}{r_j}$ ensures that the rotation angle is in the first iteration and eliminates duplicate solutions.

\section{Augmenting ML schedulers with \namebf}
\label{sec:design}

This section describes how \name extends its link-level geometric abstraction to the entire cluster.

\subsection{\namebf \affinity Graph}
\label{sec:dependency_graph}

\para{Translating angular rotations to time-shifts.} Consider a set of jobs $j \in J^l$ sharing link $l$. Using the formulation in Table~\ref{tab:formulation_phase_2}, \name computes a rotation angle $\Delta_j^l$ for  $\forall j \in J^l$ such that the compatibility score is maximized. Each $\Delta_j^l$ corresponds to a time-shift $t_j^l$ to delay the start time of $j$ to maximize its compatibility with all other jobs in $J^l$. Given that the perimeter of the unified circle, $p^l$, is the LCM of the iteration times of all jobs $j \in J^l$, \name computes these time-shifts by multiplying the normalized rotation angle with $p^l$.
Formally:

\begin{equation}
\forall j \in J^l, t_j^l = (\frac{\Delta_j^l}{2\pi} \times p^l) \mod \textit{iter\_time}_j
\label{eq:time_shift}
\end{equation}

\para{Challenge: ensuring a unique time-shift for each job.} In a large-scale cluster, jobs are likely to traverse multiple links, and they may compete with different jobs on different links. Consider a case depicted in Figure~\ref{fig:challenges} where job $j_1$ competes with job $j_2$ on link $l_1$, and $j_2$ competes with job $j_3$ on link $l_2$. Theoretically, it is possible to migrate the jobs to pack workers of the same job under the same rack to avoid sharing the links altogether, but our experiments show that today's ML scheduling systems frequently end up with fragmented placements because of the dynamic nature of their scheduling decisions and job arrival patterns. In fact, no scheduler guarantees it can maintain perfect placement throughout time without continuously migrating jobs to defragment the cluster. For the case depicted in Figure~\ref{fig:challenges}, computing the time-shifts of $j_2$ using Equation~\ref{eq:time_shift} would result in two time-shift values $t_{j_2}^{l_1}$ and $t_{j_2}^{l_2}$. Given the interdependence between all servers participating in a training job, \name must find a \textit{unique} time-shift value for each job across links without compromising the compatibility on any link.

\begin{figure}[t]
\centering
\includegraphics[width=\columnwidth]{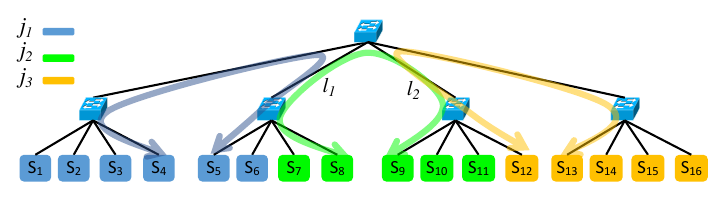}
\caption{Example illustrating a cluster-scale compatibility challenge: \name must ensure a unique time-shift for $j_2$.}
\label{fig:challenges}
\end{figure}

\para{Simple approach.} A potential approach to address the above challenge is to simply break the tie by choosing one of the $t_j^l$ values at random. But this approach cancels out the benefits of compatibility because it does not respect the carefully computed time-shifts for different links.

\begin{figure}[t]
\centering
\includegraphics[width=0.75\columnwidth]{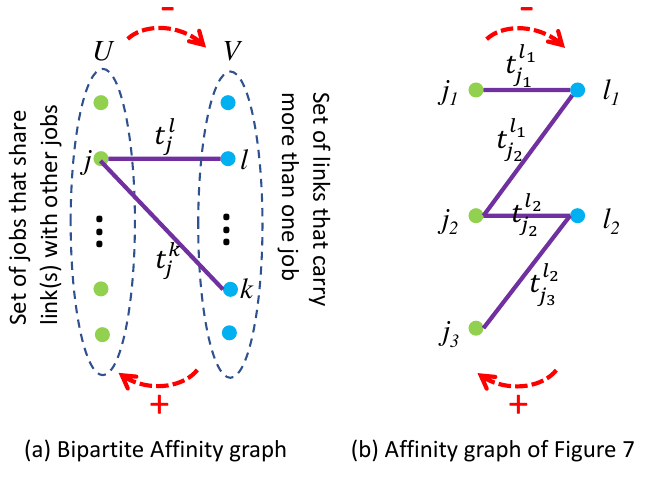}
\caption{\name's \affinity graph. Traversing left to right incurs a negative sign on the weight of edges and vice versa.}
\vspace{-0.3cm}
\label{fig:affinity}
\end{figure}

\para{Complex approach.} Another potential approach is to expand the footprint of our geometric abstraction from link-level to cluster-level. This approach requires expanding the optimization formulation in Table~\ref{tab:formulation_phase_2} to include all jobs that share their paths with any other jobs in the cluster and to encode a unique $\Delta_j$ in the constraints. This approach is not scalable because it requires expanding the perimeter of the unified circle to become the LCM of the iteration times of a large number of jobs in the cluster.  
Thus, finding a unique rotation angle for each job requires adding an  exponential number of constraints to the optimization formulation which  increases the complexity and overhead of the formulation dramatically.

\begin{algorithm}[t]
\small
\begin{algorithmic}[1]
\Procedure{BFSAffinityGraph}{}
    \IOComment{\textbf{Input} $Graph~G=(U, V, E)$: \name's \affinity graph}
    \IOComment{\textbf{Output} time\_shifts$_G$: Time-shifts of jobs in G}
    \State time\_shifts$_G$ = \{\}
    \ForAll {connected subgraphs $H \in G$, $H=(U_H, V_H, E_H)$} \label{line:subgraph}
        \State time\_shifts$_H$ = \{\}
        \textcolor{cyan}{\OldStatex \hspace{0.8cm} \(\triangleright\) \textit{BFS traversal}}
        \State Mark all vertices $u \in U_H$ as not-visited
            \State u = randomly\_select\_vertex($U_{H}$) \label{line:bfs_start}
            \State $t_u = 0$ and mark $u$ as visited
            \textcolor{cyan}{\OldStatex \hspace{0.8cm} \(\triangleright\) \textit{Only enqueue vertices from $U$ (jobs)}}            
            \State Q.enqueue($u$) \label{line:enqueue_1}
            \While{Q is not empty}
                \State $j =$ Q.dequeue()
                \textcolor{cyan}{\OldStatex \hspace{1.3cm} \(\triangleright\) \textit{Find the corresponding links and jobs}}
                \ForAll{neighbors $l$ of $j$}
                    \ForAll{neighbors $k$ of $l$}
                            \If {$k$ is not visited}
                            \State Q.enqueue($k$) and mark $k$ as visited \label{line:enqueue_2}
                            \textcolor{cyan}{\OldStatex \hspace{2.8cm} \(\triangleright\) \textit{Find the edge from $U$ to $V$}}
                            \State $e_1 = E_H(j,l)$ \label{line:e1}
                            \textcolor{cyan}{\OldStatex \hspace{2.8cm} \(\triangleright\) \textit{Find the edges from $V$ to $U$}}
                            \State $e_2 = E_H(l,k)$
                            \textcolor{cyan}{\OldStatex \hspace{2.8cm} \(\triangleright\) \textit{Compute the final time-shift}}
                            \State $t_k = (t_j - w_{e_1} + w_{e_2}) \%$ $iter\_time_k$
                            \State time\_shifts$_H[k] = t_k$
                            \label{line:traverse_formula}
                        \EndIf
                    \EndFor
                \EndFor
            \EndWhile
        \State time\_shifts$_G$ = time\_shifts$_G$ $\cup$ time\_shifts$_H$ 
    \EndFor
    \State \Return time\_shifts$_G$
\EndProcedure
\end{algorithmic}
\caption{Traversing the \affinity graph \label{alg:affinity}}
\end{algorithm}

\para{\namebf's approach.} \name introduces a bipartite \affinity graph $G=(U,V,E)$, where $U$ and $V$ are two sets of vertices, and $E$ denotes the edge set between $U$ and $V$, shown in Figure~\ref{fig:affinity}(a). Each vertex $u \in U$ represents a job that is sharing its path with other jobs somewhere in the network. Each vertex $v \in V$ represents a link that carries more than one job. An undirected edge $e = (j, l) \in E$ exists between a job $j \in U$ and a link $l \in V$ if $j$ traverses $l$. The weight of edge $e=(j, l) \in E$ is the time-shift of job $j$ on link $l$; i.e., $w_e = t_j^l$.

\para{Traversing the \affinity graph.} \name uses a graph traversal algorithm to find unique time-shifts $t_j$ for all jobs $j \in J$ while maintaining compatibility on all links. To consolidate $t_j^l$ values for each job $j$ and link $l$ into a unique $t_j$ value, \name first randomly selects one of the jobs in the \affinity graph as the reference point with $t_j = 0$ and then traverses the graph to compute unique time-shifts for all others. Algorithm~\ref{alg:affinity} describes the pseudocode of our graph traversal. In the general case, the \affinity graph is not necessarily a connected graph, hence, the algorithm traverses each connected subgraph separately (line~\ref{line:subgraph}). The traversal algorithm extends the Breadth First Search (BFS) algorithm in two ways. First, only vertices in $U$ are added to the BFS queue ($Q$) because the time-shifts correspond to jobs,  not links (lines~\ref{line:bfs_start}- \ref{line:enqueue_2}). Second, traversing from jobs ($j \in U$) to links ($l \in V$) incurs a negative sign on the $t_j^l$ weight on edge $e=(j,l)$, whereas traversing the reverse direction incurs a positive sign (lines~\ref{line:e1}-\ref{line:traverse_formula}). As soon as the vertex corresponding to job $j$ is visited, its  unique time-shift is determined by the algorithm (line~\ref{line:traverse_formula}). 

\newtheorem{theorem}{Theorem}
\begin{theorem}[Correctness and Uniqueness Guarantee]
\label{thm:uniqueness}
Given a cluster with $J$ jobs and a loop-free \affinity graph, $G=(U,V,E)$, Algorithm~\ref{alg:affinity} guarantees both correct and unique time-shifts $t_j$ for all jobs $j \in J$.
\end{theorem}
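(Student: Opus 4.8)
The plan is to prove two things: \emph{well-definedness} — that Algorithm~\ref{alg:affinity} assigns exactly one time-shift $t_j$ to each job — and \emph{compatibility preservation} — that on every link $l$ the resulting shifts induce the same overlaid demand $demand_{\alpha}$, hence the same score of Eq.~\ref{eq:opt_objective}, as the per-link optimizer of Table~\ref{tab:formulation_phase_2}. The structural fact behind both is that a loop-free \affinity graph is a forest, so each connected subgraph $H$ processed at line~\ref{line:subgraph} is a tree, the BFS tree rooted at the randomly chosen vertex $u$ equals $H$ (every edge is a tree edge), and between any two vertices of $H$ there is a unique simple path. For bookkeeping I would track, for a job $j$ on link $l$, the quantity $\theta_j^l := t_j - t_j^l$ and show it is constant over $j \in J^l$ up to per-job periodicity.

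\textbf{Well-definedness.} First I would note termination (each vertex is enqueued at most once by the ``not visited'' guard; finitely many finite subgraphs) and then that BFS from $u$ with $t_u = 0$ visits every vertex of the tree $H$, writing each job $k \neq u$ exactly once — when the parent job of $k$'s parent-link is dequeued and line~\ref{line:traverse_formula} fires — after which $k$ is marked visited and never touched again. Iterating over all connected subgraphs (line~\ref{line:subgraph}) covers every job that shares a link; a job sharing no link carries no compatibility constraint and needs no shift. Since a link vertex and all job vertices incident to it lie in one connected subgraph, the per-subgraph assignments never collide, so each job ends with a single $t_j$. I would also remark that picking a different root merely composes the output with a per-component uniform shift, which (by the next paragraph) leaves every compatibility score unchanged.

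\textbf{Compatibility preservation.} Fix a link $l$ and let $j_p$ be the unique job that is the tree-parent of the vertex $l$. Since $l$'s neighbors in $H$ are exactly the jobs in $J^l$ and $H$ is a tree, every other $j \in J^l$ has $l$ as its tree-parent, so $j$ is set by line~\ref{line:traverse_formula} with $e_1 = E_H(j_p,l)$ and $e_2 = E_H(l,j)$, giving $t_j \equiv t_{j_p} - t_{j_p}^l + t_j^l \pmod{\textit{iter\_time}_j}$; writing $\theta := t_{j_p} - t_{j_p}^l$ (also valid for $j_p$ itself), this says $t_j$ equals $t_j^l + \theta$ plus an integer multiple of $\textit{iter\_time}_j$. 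By the periodicity recorded in Section~\ref{sec:traffic_pattern} — a job repeats every $\textit{iter\_time}_j$ — advancing job $j$'s start by such a multiple leaves its contribution $bw\_circle_{j}(\cdot)$ on $l$'s unified circle unchanged, so the overlaid $demand_{\alpha}$ under $\{t_j\}_{j\in J^l}$ equals that under $\{t_j^l+\theta\}_{j\in J^l}$, which is exactly a rigid rotation by $\theta$ of the overlay the optimizer certified under $\{t_j^l\}$. A global rotation changes no bandwidth-demand value, only relabels which angle of $A$ carries it, so it preserves $\sum_\alpha Excess(demand_{\alpha})$ and hence $score$ (Eqs.~\ref{eq:opt_def_loss}--\ref{eq:opt_objective}); as $l$ was arbitrary, compatibility is maintained on all links.

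\textbf{Main obstacle.} The crux is the opening claim of the last paragraph: loop-freeness is precisely what forces every job on a link to be that link's tree-parent or one of its tree-children, which is what lets the independently solved per-link rotations $\{t_j^l\}$ glue into one globally consistent assignment. If a cycle were present, walking around it and accumulating the signed weights of line~\ref{line:traverse_formula} would in general return a nonzero holonomy — the per-link optimizations are blind to one another — and then no single assignment could satisfy all links at once; the rigid-rotation invariance above is what upgrades ``consistent up to a per-link additive constant'' into ``identical score.'' The only remaining care is to confirm that the $\bmod\ \textit{iter\_time}$ reductions on line~\ref{line:traverse_formula} and in Eq.~\ref{eq:time_shift} are harmless, which is immediate from per-job periodicity.
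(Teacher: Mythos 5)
Your proof is correct, and its algebraic core coincides with the paper's: loop-freeness forces each link vertex to be reachable from the root through exactly one job, so every job in $J^l$ receives $t_j = t_j^l + \theta$ for a per-link common offset $\theta$, and uniqueness follows from BFS visiting each job once. The organization differs in two ways worth noting. First, the paper proves the claim by induction on the number of links in a connected \affinity graph (base case: one link; inductive step: attach a new link, which by loop-freeness and connectivity must touch exactly one job of the existing subgraph), whereas you argue directly on the BFS tree of the whole component by identifying each link's tree-parent $j_p$ and observing that all other jobs on that link are its tree-children assigned via line~\ref{line:traverse_formula}; these are two packagings of the same telescoping cancellation, and your holonomy remark makes explicit why a cycle would break it. Second, the paper \emph{defines} correctness as the pairwise relative-shift identity of Equation~\ref{eq:correctness_contraint} and stops once that identity is verified; you go one step further and prove that a common additive offset is a rigid rotation of the overlaid unified circles, hence preserves $demand_\alpha$ up to relabeling of angles and therefore the compatibility score of Table~\ref{tab:formulation_phase_2}. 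That extra step, together with your observation that the $\bmod\ \textit{iter\_time}_j$ reductions are harmless by per-job periodicity, closes a small gap the paper leaves implicit (its base-case and inductive-step algebra silently drops the modulus), so your version is if anything slightly more complete. Both proofs handle disconnected \affinity graphs identically, component by component.
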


\begin{proof}
The key insight behind this Theorem is that our graph traversal maintains the same \textit{relative} time-shift for all job pairs in the \affinity graph. The full proof uses induction and is provided in Appendix~\ref{appendix:proof} along with an example corresponding to the \affinity graph in Figure~\ref{fig:affinity}(b).
\end{proof}

\begin{figure}[t]
    \centering
    \includegraphics[width=0.8\columnwidth]{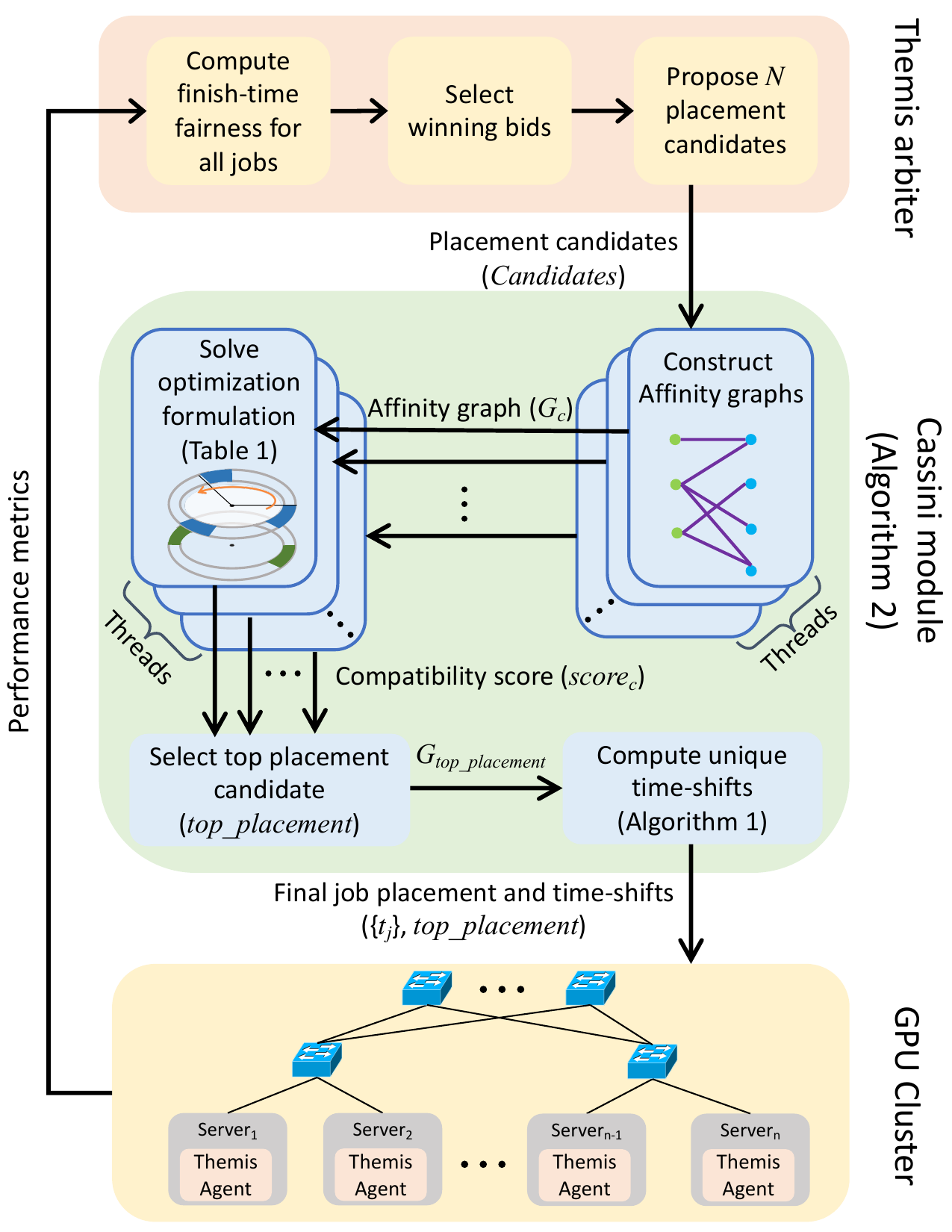}
    \caption{Using \name to augment Themis~\cite{themis}.}
    \label{fig:system}
\end{figure}

\subsection{Putting It All Together}
\label{sec:putting_togerher}

This section uses Themis~\cite{themis} as a running example of a scheduler augmented by \name. 

\para{Overview of Themis.} Themis uses a fairness metric, called finish-time fairness, to achieve long-term fairness across the entire cluster by periodically updating the placement of jobs. To achieve fairness, workers in Themis lease resources and go through periodic auction epochs to help jobs that are farthest in terms of their fairness metric bid for more resources. Themis's central arbiter determines the global winning bids to maximize the aggregate improvement in the finish-time fair metrics across all bidding jobs. To capture network overheads, Themis uses a slowdown penalty based on whether the workers are under the same rack or across racks.

\para{Augmenting Themis with \namebf.} Figure~\ref{fig:system} shows how \name augments Themis. First, \name modifies Themis's arbiter to return a set of potential placement candidates instead of a single placement. Then, \name selects the top placement candidate based on its compatibility metric and computes unique time-shifts for jobs that share the network. \name transfers the time-shifts to Themis's agent running on servers. Finally, Themis's agent applies the time-shifts at the start of the epoch. Note that \name respects the hyper-parameters, such as batch size or the number of workers, decided by Themis (or other schedulers that \name is augmenting).
Next, we describe each step in detail.

\setlength{\textfloatsep}{15pt}
\begin{algorithm}[t]
\small
\begin{algorithmic}[1]
\Procedure{CassiniModule}{}
    \IOComment{\textbf{Input} $Jobs$: Array of active training jobs in the cluster}
    \IOComment{\textbf{Input} $Links$: Array of all links in the cluster}
    \IOComment{\textbf{Input} $Candidates$: Array of candidate placements for jobs}
    \IOComment{\textbf{Output} $top\_placement$, $\{t_j\}$: Top placement and time-shifts}
    \For {$c \in Candidates$}  \textcolor{gray}{\hspace{0.05cm} \(\triangleright\) \textit{(Loop is executed with threads)}}
        \textcolor{cyan}{\OldStatex \hspace{0.8cm} \(\triangleright\) \textit{Construct \name's \affinity graph corresponding}}
        \textcolor{cyan}{\OldStatex \hspace{0.8cm} \textit{ to this placement (\S\ref{sec:dependency_graph})}}
        \State $G_c$ = $(U_c, V_c, E_c)$ \label{line:affinity_begin}
        \ForAll {$j \in Jobs$, $l \in Links$}
            \If{ $j$ shares links with other jobs}
                \State $U_c = U_c \cup j$
                \If {$l$ carries more than one job}
                    \State $V_c = V_c \cup l$
                        \If {$j$ is traversing $l$}
                            \State $e=$ new Edge between $\{(j, l)\}$
                            \State $E = E \cup e$
                            \State $w_e = 0$ \label{line:affinity_end}
                        \EndIf
                \EndIf
            \EndIf
        \EndFor
        \textcolor{cyan}{\OldStatex \hspace{0.8cm} \(\triangleright\) \textit{Discard this candidate if \affinity graph has a loop}}
        \If{there is a loop in $G_c$}
            \State $Candidates.remove(c)$
            \State continue \label{line:loop-detection}
        \EndIf
        \State $score_c$ = \{\}
        \For{$l \in V_c$} \textcolor{gray}{\hspace{0.05cm} \(\triangleright\) \textit{(Executed with threads)}} \label{line:compute_compatbility_begin} 
            \textcolor{cyan}{\OldStatex \hspace{1.2cm} \ \ \ \(\triangleright\) \textit{List of jobs traversing link $l$}}
            \State $J^l = \{\}$
            \ForAll {neighbors $j$ of $l$}
                \State $J^l = J^l \cup j$
            \EndFor
            \textcolor{cyan}{\OldStatex \hspace{1.2cm} \ \ \ \(\triangleright\) \textit{Solve \name optimization (Table~\ref{tab:formulation_phase_2})}}
            \State $score_l$ = \Call{CassiniOptimization}{$J^l$}
            \State $score_c = score_c \cup score_l$ \label{line:compute_compatbility_end} 
        \EndFor
        \textcolor{cyan}{\OldStatex \hspace{0.6cm} \ \ \ \(\triangleright\) \textit{Set the compatibility score of candidate $c$}}        
        \State $c.score = score_c$
    \EndFor
    \textcolor{cyan}{\OldStatex \hspace{0.4cm} \(\triangleright\) \textit{Sort placements based on compatibility metric}} 
    \State \Call{SortCandidates}{$Candidates$, ``$Decreasing$"} \label{line:top_candidate_begin}
    \State $top\_placement = Candidate[0]$ \label{line:top_candidate_end}
    \textcolor{cyan}{\OldStatex \hspace{0.4cm} \(\triangleright\) \textit{Find unique time-shifts (Algorithm~\ref{alg:affinity})}} 
    \State $\{t_j\}$ = \Call{BFSAffinityGraph}{$G_{top\_placement}$} \label{line:time_shifts_begin}
    \State \Return $\{t_j\}, top\_placement$ \label{line:return_time_shifts}
\EndProcedure
\end{algorithmic}
\caption{\name Module's Pluggable Algorithm \label{alg:cassini_compatibility}}
\end{algorithm}

\para{Step 1. Discover placement candidates.} In this step, \name decouples the process of finding the number of workers for each job to improve finish-time fairness from the exact worker placement in the cluster. To do so, instead of returning the precise job placements at the end of the auction phase, we configure Themis to return up to $N$ \textit{candidate placements}. These candidate placements all achieve the same finish-time fairness, but their worker placements are different. For instance, consider a case where jobs $j_1$ and $j_2$ each place a bid on two additional workers, and they both win, while job $k_1$ is losing one worker, and job $k_2$ is losing three. In this case, there are two ways to distribute workers: $(i)$ $k_1$ and $k_2$ each give up one worker to $j_1$, and $k_2$ gives two workers to $j_2$; or $(ii)$ $k_1$ and $k_2$ each give up one worker to $j_2$, and $k_2$ gives two workers to $j_1$. Both options are candidate placements. Moreover, selecting which workers in $k_1$ and $k_2$ should be reassigned creates another set of candidate placements. \name collects these candidate placements and feeds them as input to the next step. This process requires changing only $\approx$300 lines of code in Themis.

\para{Step 2. Find unique time-shifts.} This step is listed in Algorithm~\ref{alg:cassini_compatibility} and includes \name's key contributions. \name first constructs an \affinity graph $G_c$ for each placement candidate $c \in Candidates$ (lines~\ref{line:affinity_begin}-\ref{line:affinity_end}). Following Theorem~\ref{thm:uniqueness}, to ensure correctness, we discard placement candidates with loop(s) in any of their \affinity subgraphs (line~\ref{line:loop-detection}). Then, \name constructs the unified circles for each job and solves the optimization formulation in Table~\ref{tab:formulation_phase_2} for all links in $G_c$ to obtain the compatibility metric for each link in $V_c$ (lines~\ref{line:compute_compatbility_begin}-\ref{line:compute_compatbility_end}). Given that the placement candidates are independent of each other, our implementation uses multiple threads to parallelize this computation. Once the compatibility score of all candidate placements is determined, \name sorts each placement candidate based on the average compatibility score of its member links to find the top placement candidate $top\_placement \in Candidates$  (lines~\ref{line:top_candidate_begin}-\ref{line:top_candidate_end}).\footnote{Instead of averaging, tail or other metrics may also be used.} Then, it executes Algorithm~\ref{alg:affinity} on $top\_placement$'s \affinity graph $G_{top\_placement}$ to obtain unique time-shifts $\{t_j\}, \forall j \in V_{top\_placement}$ for jobs that share links with other jobs in this placement (line~\ref{line:time_shifts_begin}). Finally, $top\_placement$ and its corresponding time-shifts are transferred to Themis's agent running on the servers (line~\ref{line:return_time_shifts}).

\para{Step 3. Apply time-shifts.}  When a time-shift $t_j$ is received by the Themis agent running job $j$, it delays the start of the next immediate training iteration by $t_j$. However, even though the workers of the same job apply a unique time-shift, the time-shift values can drift due to noise, stragglers, and other unpredictable events. \name updates the agent on each server to measure the drift and adjust the time-shifts. Our evaluations show that time-shift adjustments are rare (\S\ref{sec:testbed_time_shift_adjustment}).

\section{Evaluations}
\label{sec:testbed}

We evaluate \name on a 24-server cluster and compare its performance against other state-of-the-art ML schedulers. First, we describe our evaluation methodology and setup (\S\ref{sec:eval_setup}). Then, we compare \name's performance gains with respect to the state-of-art ML schedulers for a mix of data and model parallel DNN training jobs (\S\ref{sec:testbed_performance_gains}). Next, we evaluate the impact of data parallelism (\S\ref{sec:testbed_congestion_data_parallel}), model parallelism
(\S\ref{sec:model_parallel}), partial compatibility (\S\ref{sec:testbed_timeshift_gains}), and having multiple GPU per server on \name's performance (\S\ref{sec:multiple_gpus}). Finally, we evaluate the frequency of time-shift adjustments and \name's overhead (\S\ref{sec:testbed_time_shift_adjustment}). \name's code will be publicly available.

\begin{figure*}[t]
    \begin{minipage}{0.65\textwidth}
    \includegraphics[width=\textwidth]{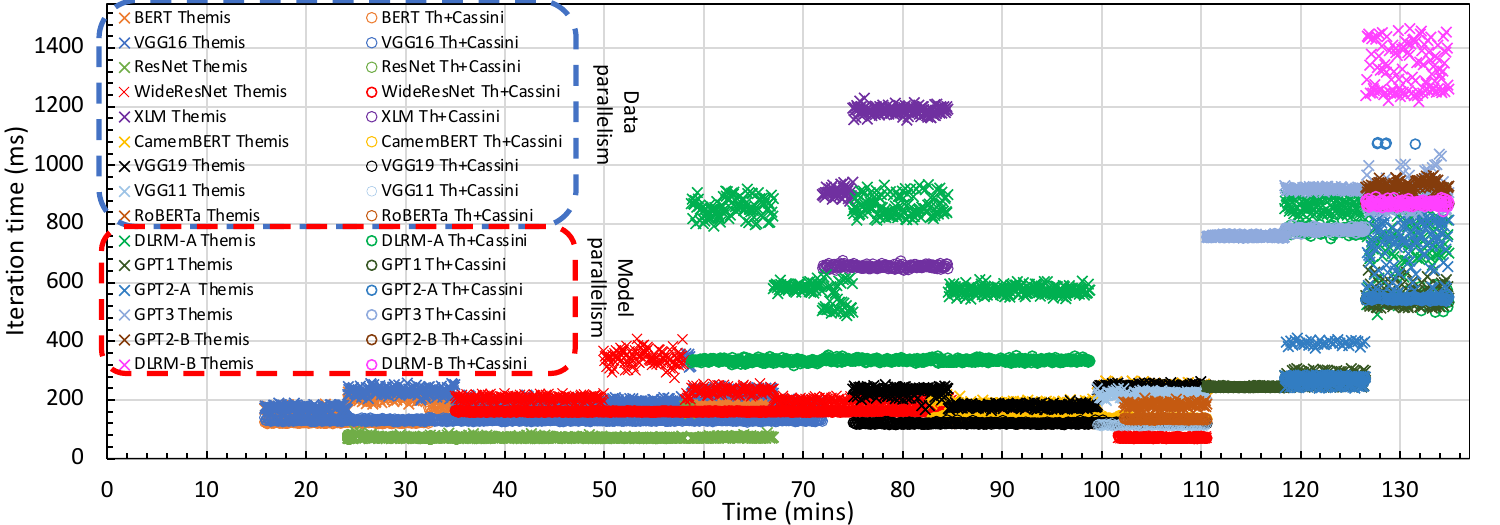}
    \caption{[Poisson trace] Time series of data parallel and model parallel jobs.}
    \label{fig:scatter_epoch}
    \end{minipage}
    \hspace{0.2cm}
    \begin{minipage}{0.3\textwidth}
    \includegraphics[width=\textwidth]{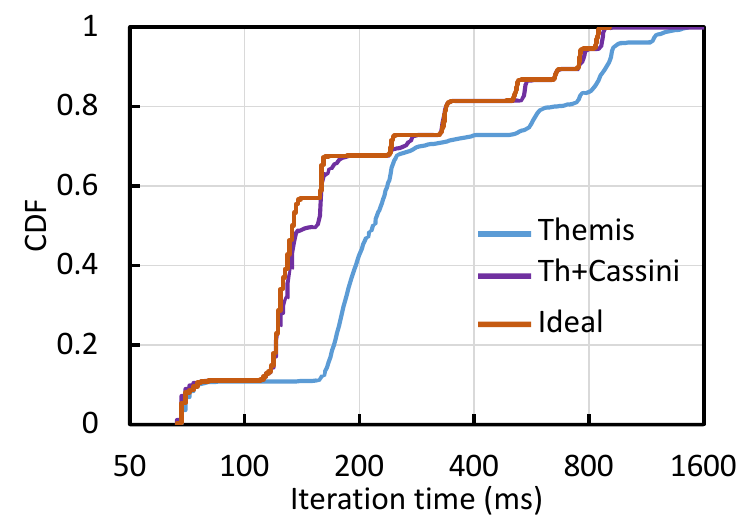}
    \caption{[Poisson trace] CDF of the iteration times.}
    \label{fig:scatter_epoch_cdf}
    \end{minipage}
\end{figure*}

\subsection{Methodology And Setup}
\label{sec:eval_setup}

\para{Setup.} We build a prototype to demonstrate the gains of \name in real-world settings. Our prototype includes 24 ASUS ESC4000A-E10 servers each with one A100 Nvidia GPU~\cite{a100} (40~GB of HBM2 memory) and one 50~Gbps Mellanox ConnectX5 NIC. We use RoCEv2 for communication and enable DCB~\cite{dcb} and PFC on these interfaces to support a lossless fabric for RDMA. The servers run Ubuntu 18.04~LTS. We use PyTorch~\cite{pytorch} version 1.8.0, CUDA version 11.1, and NCCL version  2.11.4 in our training framework.

\para{Topology.} We use a Tofino switch to construct the logical topology illustrated in Figure~\ref{fig:testbed_topology} (Appendix~\ref{app:testbed_topology}) with 13 logical switches and 48 bi-directional links and 2:1 over-subscription above the ToRs. We use flow table rules that match on <input port, destination MAC> to forward packets to the correct output port and physical loopback cables for switch-to-switch links. We use the default RDMA-based DCQCN congestion control algorithm~\cite{dcqcn}. ECN is enabled through WRED with min and max thresholds set to 1000 and 2000 cells. The PFC skid buffer threshold of each virtual switch is 4000 cells.

\para{DNN Workloads.} We experiment with 13 popular DNN models: VGG11~\cite{vgg11}, VGG16~\cite{vgg16}, VGG19~\cite{vgg19}, ResNet50~\cite{resnet}, WideResNet101~\cite{wideresnet}, BERT~\cite{bert}, RoBERTa~\cite{roberta}, XLM~\cite{xlm}, CamemBERT~\cite{camembert}, GPT-1~\cite{gpt_1}, GPT-2~\cite{gpt_2}, GPT-3~\cite{gpt_3}, and DLRM~\cite{dlrm}. All Models have an equal probability of occurrence and the training duration time is randomly selected between 200 - 1,000 iterations. Table~\ref{tab:model_parameters} (Appendix~\ref{app:dnn_models_details}) provides details about model configurations and batch sizes used in this paper.

\para{Parallelization strategy.} We use data parallelism to train VGG, ResNet, and BERT family of models using Pytorch's DistributedDataParallel framework~\cite{pytorchDDP}. This framework distributes the dataset across GPUs and uses RingAllreduce to update the gradients during each training iteration. We train DLRM and GPT family of models using a hybrid of data/model parallelism. Following prior work~\cite{topoOpt}, we use Meta's opensource codebase for training DLRM~\cite{dlrm} where the embedding tables are partitioned across GPUs, while the rest of the model is replicated on all GPUs. Finally, we use Microsoft's DeepSpeed tool~\cite{deepspeed_gpt} to partition the GPT models across GPUs using hybrid data/model parallelism.

\para{Traces.} Following prior work~\cite{synergy_osdi22, muri, themis, pollux}, we use three sets of traces in our evaluations: ($i$) \textit{Poisson trace}: we use a Poisson distribution for job arrivals where the job arrival time is determined by the load parameter defined as the average fraction of GPUs that are serving active jobs in the cluster. We vary the load between 80\% and 100\%; ($ii$) \textit{dynamic trace}: where a set of DNN training jobs are present in the cluster and a new set of jobs arrive; ($iii$) \textit{snapshot trace}: we take several snapshots of the cluster where all jobs are present at the start of the experiment. 

\begin{figure*}[t]
    \includegraphics[width=\textwidth]{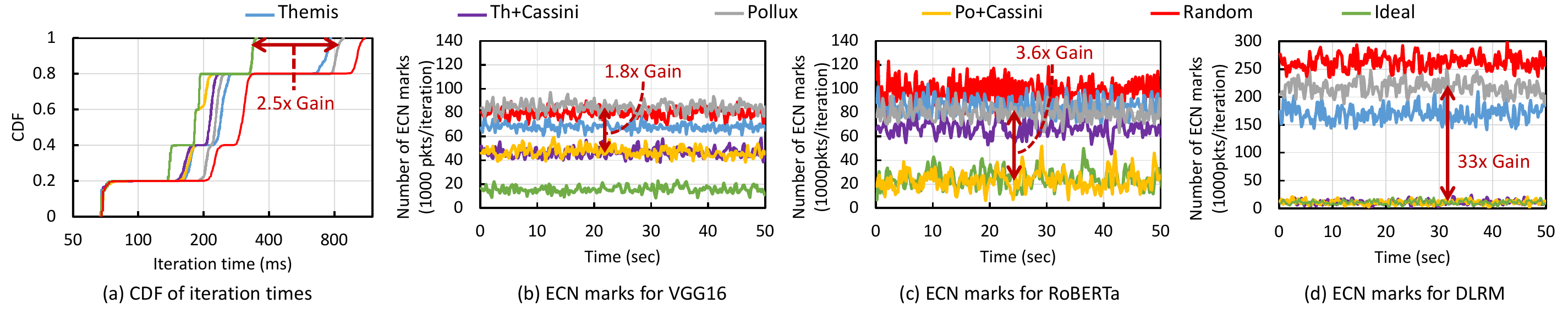}
    \caption{[Dynamic trace] CDF of training iteration times and the number of ECN marked packets per iteration.}
    \vspace{-0.2cm}
    \label{fig:cdf_congestion_testbed_dp}
\end{figure*}

We implement the following schemes in our testbed.

\begin{itemize}[align=left, leftmargin=0pt, labelindent=0pt, listparindent=\parindent, labelwidth=0pt, itemindent=!]
\itemsep0em 

\item \textbf{Themis.} We use the default Themis~\cite{themis} scheduler as one of our baselines. The bidding period (epoch) is set to 10 mins. Jobs participate in an auction where they send bid values for different GPU allocations. An assignment of GPU servers is valid until the period of lease time. When the lease time expires, the job gives up the server and a new auction is conducted for all the released servers. When a job arrives, its initial number of requested workers is randomly selected between 1 to 12 GPUs. As the experiment progresses, the number of workers is automatically tuned based on Themis's finish-time-fairness metric.

\item \textbf{Th+\namebf.} Themis augmented with \name as described in Section~\ref{sec:putting_togerher}. In particular, this scheduler takes up to 10 placement candidates from Themis, constructs geometric circles and \affinity graphs for each placement to capture the cluster-level compatibility, solves our optimization formulation to find time-shifts for jobs that are competing on bandwidth, selects the top placement candidate based on compatibility ranks, and finally computes a unique time-shift for jobs. The unique time-shifts and final placement are given to the Themis agent running on GPUs. Unless otherwise stated, we use 5$^{\circ}$ as the angle discretization precision (Table~\ref{tab:formulation_phase_2}) to compute the time-shifts.

\item \textbf{Pollux.} We use Pollux as a second baseline~\cite{pollux}. Pollux considers the system throughput and statistical efficiency to maximize cluster-wide training performance. Pollux periodically queries jobs and reassigns GPUs to maximize the overall goodput of the cluster. Pollux also models migration costs and avoids frequent job migrations.

\item \textbf{Po+\namebf.} We augment Pollux with \name using a similar approach described in Section~\ref{sec:putting_togerher} except that Pollux uses overall goodput, instead of finish-time-fairness, to adjust hyper-parameters during scheduling epochs. Hence, the number of workers assigned to each job does not always agree with Themis. To make an apples-with-apples comparison, all \name-related parameters in Po+\name and Th+\name are the same.

\item \textbf{Ideal.} An ideal scheduler that runs each training job on a dedicated cluster. This scheduler incurs no congestion since the entire cluster is dedicated to one job and there is no need to take job compatibility into account.

\item \textbf{Random.} A random placement scheduler that places workers for each job randomly. This scheduler has the highest network overhead since it does not take locality or compatibility into account.

\end{itemize}

\para{Profiling DNN models.} Similar to Themis and Pollux, we profile each DNN using Pytorch and Infiniband port counters. Our profiling script executes a few iterations of each job to measure iteration times and collect link utilization patterns for various batch sizes and numbers of workers. Fine-grained link utilization data from the port counters enables \name to build the geometric circles and the corresponding bandwidth demands for our optimization  ($bw\_circle_{j}(\alpha)$ in Table~\ref{tab:formulation_phase_2}).

\begin{figure*}[t]
    \includegraphics[width=\textwidth]{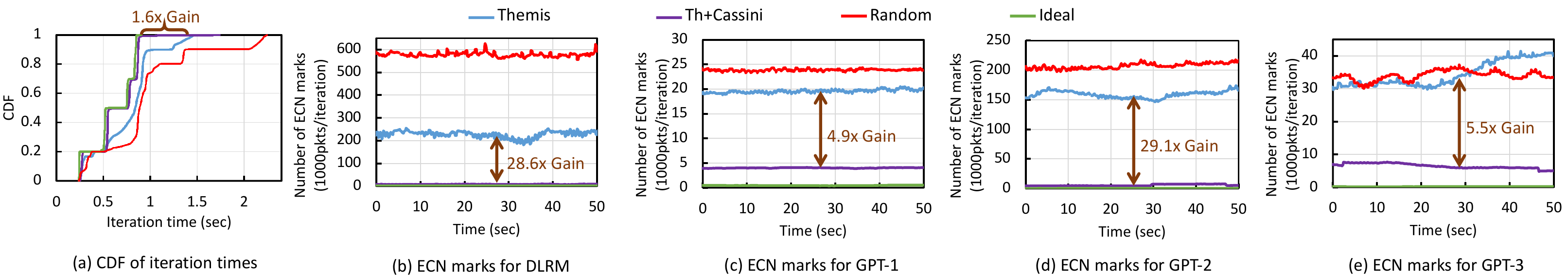}
    \caption{[Dynamic trace, model parallelism] CDF of training iteration times and the number of ECN marked packets.}
    \vspace{-0.2cm}
    \label{fig:cdf_congestion_testbed_mp}
\end{figure*}

\subsection{Performance Gains}
\label{sec:testbed_performance_gains}

We evaluate \name's performance gains using job arrivals and departures from our Poisson trace.  Figure~\ref{fig:scatter_epoch} plots the time series of events in the cluster for Themis and Th+\name. In this experiment, we use a combination of data parallel and model parallel training jobs. Placement changes are triggered by job arrivals, departures, and when the lease time of any of the servers expires. Given the dynamic nature of the trace, the servers are occupied gradually and their lease times are not synchronized.  For instance, at time $t=72$~mins a data parallel training job for the XLM~\cite{xlm} model arrives at the cluster and Themis places it such that one of the links is shared with WideResNet101~\cite{wideresnet} without the knowledge that XLM and WideResNet101 are not compatible jobs. 
In contrast, Th+\name improves the iteration time of XLM by placing it with compatible jobs.
Note that this trace contains different training instances of the same DNN models where they differ in their hyper-parameters and number of workers (details in Appendix~\ref{app:dnn_models_details}). For instance, at time $t=118$~min, a model parallel GPT-2~\cite{gpt_2} training job (labeled as GPT-2-A) arrives at the cluster and without considering the communication demands, Themis places this job such that it shares a link with another large GPT-3~\cite{gpt_3} model in the cluster. GPT-2-A and GPT-3 models are not compatible making both training jobs slow down. In contrast, Th+\name improves GPT-2-A's iteration time by placing it with a compatible GPT-1 model.

Figure~\ref{fig:scatter_epoch_cdf} plots the CDF of iteration times of all the data points in Figure~\ref{fig:scatter_epoch} and shows that compared to Themis, Th+\name improves the average and 99$^{th}$ percentile tail iteration times by 1.4$\times$ and 1.5$\times$,
respectively. We observe similar gains between Po+\name and Pollux.  
The figure also shows that Th+\name achieves similar performance as our Ideal benchmark. 

\subsection{\namebf Reduces Congestion}
\label{sec:testbed_congestion_data_parallel}

We next demonstrate the effectiveness of \name in reducing the congestion in the network.
We use our dynamic trace to trigger the arrival of DLRM and ResNet50 to the cluster while the cluster is busy running other jobs. Given the contrast between the network demand between these two models, this experiment serves as a stress test to highlight the importance of compatible job placement on network congestion. In this case, both Pollux and Themis end up placing DLRM on servers that share network links with other non-compatible jobs which hurts the iteration times. In contrast, Th+\name and Po+\name flip the placements of DLM and ResNet50 to achieve compatibility, thereby improving the training iteration times, as depicted in Figure~\ref{fig:cdf_congestion_testbed_dp}(a). Compared to Themis, Th+\name improves the average and 99$^{th}$ percentile tail iteration times by 1.5$\times$ and 2.2$\times$, respectively. Similarly, compared to Pollux, Po+\name improves the average and 99$^{th}$ percentile tail iteration times by 1.6$\times$ and 2.5$\times$, respectively. 

The gains in iteration times are a direct consequence of \name's ability to reduce network congestion. Figures~\ref{fig:cdf_congestion_testbed_dp}(b) to (d) show the number of ECN-marked packets per iteration for different models. The figure shows that Th+\name and Po+\name consistently maintain a lower number of ECN marks per iteration across the models. In particular, Figure~\ref{fig:cdf_congestion_testbed_dp}(d) shows that, on average, DLRM is experiencing 27$\times$ and 33$\times$ more ECN marks in Themis and Pollux, compared to their \name-augmented counterparts.

\begin{figure*}[t]
    \centering
    \includegraphics[width=\textwidth]{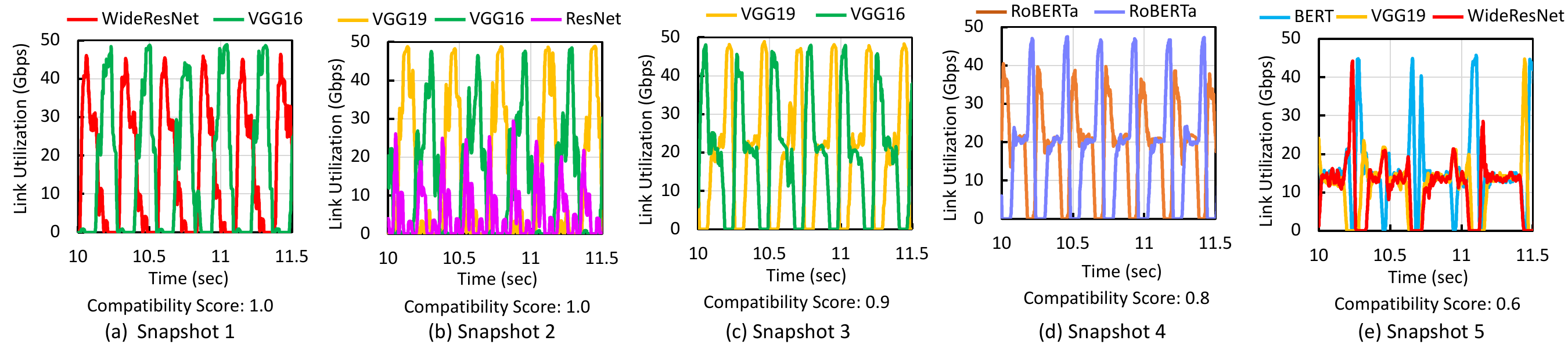}
    \caption{[Snapshot trace] Link utilization of compatible and partially compatible snapshots.}
    \label{fig:bw_share_compatible}
\end{figure*}

\subsection{Impact of Model Parallelism}
\label{sec:model_parallel}

To ensure \name's gains are not limited to data parallel jobs, we run a series of experiments when all jobs in the trace are using model parallelism. As shown in Section~\ref{sec:traffic_pattern}, model parallel jobs have several \on and \off phases in each iteration where the duration and bandwidth demand of each phase depends on the parallelization strategy and hyper-parameters. Similar to the data parallel case, we use \name's geometric abstraction to capture the duration and bandwidth demand of \on and \off phases of a series of model parallel jobs. We then use \name's optimization formulation and \affinity graph to compute the time-shifts for the jobs sharing the same network links. We use our dynamic trace to trigger the arrival of multiple GPT and DLRM models while the cluster is training other model parallel jobs.

\begin{table}
\scriptsize
\begin{tabular}{|p{0.4cm}|p{2.1cm}|p{1.1cm}|p{0.8cm}|p{0.6cm}|p{0.7cm}|}
\hline 
Snap-shot ID & Competing jobs (batch size) &  Th+\name & Themis & Comp-atibility score & time-shift (ms)  \\ \hline\hline
\rowcolor{LightCyan} 1 & WideResNet101~(800)  &  138 ms   & 205 ms & 1.0 & 0 ms \\
\rowcolor{LightCyan}        & VGG16~~(1400)  & 148 ms  & 199 ms  & &150 ms\\
\hline
\rowcolor{LightCyan} 2 & VGG19~~(1400)  &  168 ms & 220 ms & 1.0 & 0 ms  \\
\rowcolor{LightCyan}       & VGG16~~(1700)  &  163 ms  & 220 ms & & 158 ms\\
\rowcolor{LightCyan}       & RESNET50~~(1600) &  59 ms  & 55 ms & & 46 ms \\
\hline
\rowcolor{LightCyan} 3 & VGG19~~(1024)  &  166 ms & 176 ms & 0.9 & 0 ms \\
\rowcolor{LightCyan}       & VGG16~~(1200)  &  168 ms & 177 ms  & &  100 ms\\
\hline
\rowcolor{LightCyan}       4 & RoBERTa~~(12) & 164 ms  & 210 ms & 0.8 & 0 ms \\
 \rowcolor{LightCyan}        & RoBERTa~~(12) & 180 ms  & 208 ms & & 60 ms  \\
\hline
5 & BERT~~(8)      & 209 ms & 213 ms & 0.6 & 0 ms \\
       & VGG19~~(1400)  &  294 ms  & 292 ms & & 42 ms\\
       & WideResNet101~(800) &  265 ms & 266 ms & & 191 ms\\
\hline
\end{tabular}
\captionof{table}{[Snapshot trace] Compatibility score of DNN jobs.}
\label{table:compatibility}
\end{table}

Figure~\ref{fig:cdf_congestion_testbed_mp}(a) shows the CDF of training iteration times. We find that similar to the data parallel case, Themis ends up placing non-compatible jobs, such as <GPT-3 and GPT-2> or <GPT-1 and DLRM>, on the same network link, whereas Th+\name places compatible jobs, such as <GPT-1 and GPT-2> or <GPT-3 and DLRM>, on the same network links. Consequently, Th+\name improves the average and 99$^{th}$ percentile tail iteration times by 1.2$\times$ and 1.6$\times$, respectively. We observe similar gains between Pollux and Po+\name.

Figures~\ref{fig:cdf_congestion_testbed_mp}(b) to (e) depict the number of ECN marked packets per iteration for the models in this experiment. Depending on the status of congestion, different models experience different numbers of ECN-marked packets. For instance, Figure~\ref{fig:cdf_congestion_testbed_mp}(d) shows that compared to Themis, Th+\name reduces the average number of ECN marked packets by 29$\times$. 

\subsection{Impact of Partial Compatibility}
\label{sec:testbed_timeshift_gains}

An important consideration for practical deployment of \name is to evaluate the impact of placing \textit{partially} compatible jobs on the same link(s). Intuitively, the higher the compatibility score, the better interleaving is achieved. As the compatibility score reduces, the gains diminish. To evaluate the impact of partial compatibility, we take five snapshots of the cluster, as shown in Table~\ref{table:compatibility} and compute the compatibility scores and time-shift values from our optimization formulation (\S\ref{sec:geometric}) 
for each snapshot. We then measure the average communication time of each model under Themis and Th+\name. The table shows that when the compatibility score is 0.6, \name's gain compared to Themis starts to diminish. Note that \name avoids placing jobs with low compatibility score (e.g., snapshot 5) on the same link.

We demonstrate the reason behind diminishing returns by plotting the link utilization of each snapshot in Figure~\ref{fig:bw_share_compatible}. When the compatibility score is high, the opportunity for interleaving is large and jobs end up interleaving their network usage most of the time, as shown in Figures~\ref{fig:bw_share_compatible}(a)--(d). However, as the compatibility score is reduced, jobs are forced to share the link most of the time, as shown in Figure~\ref{fig:bw_share_compatible}(e). Additionally, Figure~\ref{fig:bw_share_compatible}(b) demonstrates a desirable feature of our optimization formulation where compatibility does not require \textit{strict} interleaving. In this snapshot, only VGG19 and VGG16 are interleaved and ResNet's communications overlap with both jobs because its network demand is not significant and can co-exist with the other jobs. 

\begin{figure}[t]
    \includegraphics[width=\columnwidth]{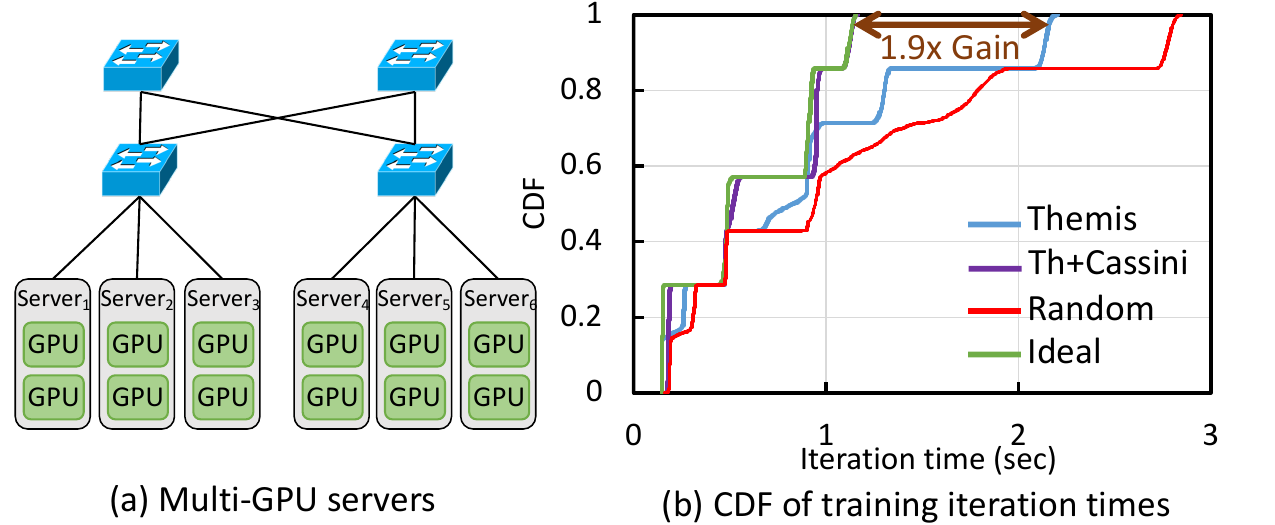}
    \caption{[Dynamic trace] Multi-GPU experiment.}
    \label{fig:multi_gpu}
\end{figure}

\subsection{Impact of Multiple GPUs per Server}
\label{sec:multiple_gpus}

Although having multiple GPUs per server enables allocating more GPUs within the same server to a job, today's large-scale training jobs require hundreds of workers~\cite{topoOpt, mudigere2021highperformance}, making it impossible to avoid network congestion entirely by relying on multi-GPU servers. In such cases, \name's gains are more pronounced for models that are distributed outside the boundary of a server.

We evaluate \name's gains with multi-GPU servers by removing GPUs from some of our single-GPU servers and adding them to other servers to compose servers with two GPUs. We create a topology with six servers, each with two GPUs, as shown in Figure~\ref{fig:multi_gpu}(a). We then use a mix of data parallel and model parallel jobs and generate a series of job arrivals using our dynamic trace.

Figure~\ref{fig:multi_gpu}(b) demonstrates that compared to Themis, Th+\name improves the average and 99$^{th}$ percentile tail iteration times by 1.4$\times$ and 1.9$\times$, respectively. These gains are achieved because some jobs require more than two GPUs to train. For instance, at a particular instant in our dynamic trace, the XLM and ResNet50 models each require three GPUs to train. With the arrival of a network-intensive model DLRM requesting three more GPUs, Themis decides to place DLRM such that it shares a server with a non-compatible model (XLM) making both jobs experience congestion. In contrast, Th+\name selects a placement where DLRM shares a link with a compatible model (ResNet50), thereby improving the training iteration times of both models. 

\subsection{Adjusting Time-Shifts and Overhead}
\label{sec:testbed_time_shift_adjustment}

\begin{figure}[t]
    \begin{minipage}{0.23\textwidth}
    \includegraphics[width=\textwidth]{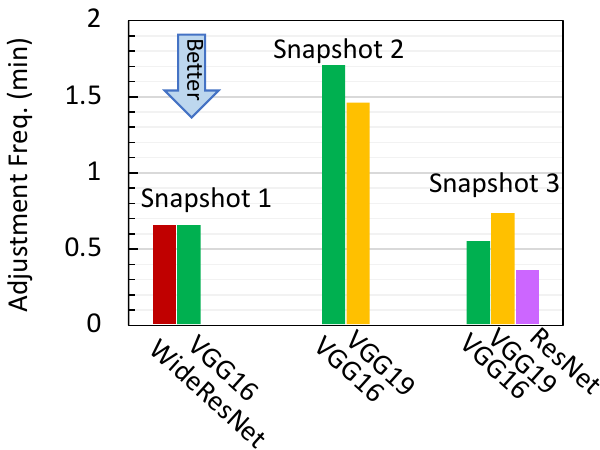}
    \caption{[Snapshot trace] The frequency of adjusting time-shifts for snapshots 1--3.}
    \label{fig:adjustment_bar}
    \end{minipage}
    \hspace{0.1cm}
    \begin{minipage}{0.23\textwidth}
    \includegraphics[width=\textwidth]{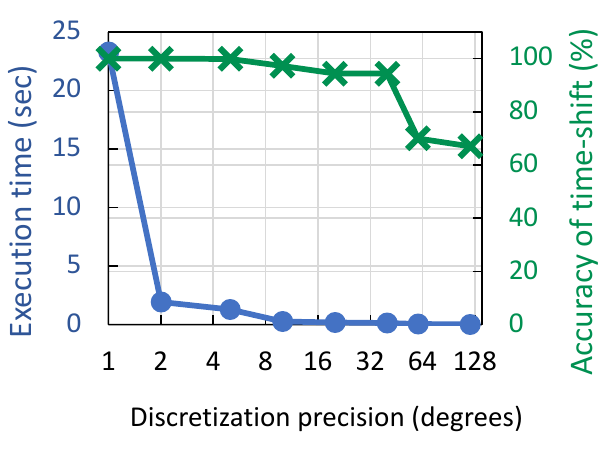}
    \caption{Impact of angle discretization on execution time and time-shift accuracy.}
   \vspace{-0.3cm}
    \label{fig:discretization_vs_perf}
    \end{minipage}
\end{figure}

To maintain \name's interleaving, workers must respect the time-shift values given to them through the scheduler. Given that our servers are not running perfectly in sync, we evaluate the frequency of automatic time-shift adjustments by the Themis (or Pollux) agents running on the servers. Note that respecting the time-shift is only required for compatible jobs. All other jobs in the cluster can send packets at any time. A worker triggers an adjustment when the start of the communication phase deviates by more than five percent of the ideal iteration time. Figure~\ref{fig:adjustment_bar} shows the average frequency of time-shift adjustments for snapshots 1,2, and 3. As shown, in all cases, the frequency is less than two adjustments per minute.

Finally, we evaluate the impact of angle discretization precision on \name's optimization formulation (Table~\ref{tab:formulation_phase_2}). Intuitively, the execution time of a coarse-grained discretization is fast but such formulation misses interleaving opportunities, thereby finding imprecise rotation angles. Given that \name's time-shifts are driven from rotation angles, a coarse-grained formulation leads to inaccurate time-shifts. On the other hand, having fine-grained precision leads to more accurate time-shifts at the expense of a longer execution time. Figure~\ref{fig:discretization_vs_perf} demonstrates this trend and shows that using a precision of 5$^{\circ}$ is the sweet spot for achieving 100\% accuracy for time-shifts while maintaining a low execution overhead.

\section{Related Work}
\label{sec:relatedwork}
\label{sec:prior_work}

Our work builds on several lines of related research.

\para{Compute scheduling approaches.} A large number of systems and techniques have focused on improving the performance of large-scale distributed ML workloads~\cite{dean2012large,coates2013deep,chilimbi2014project,goyal2017accurate, pytorch, genet, hived, goyal, sun2019optimizing, mai, switch-ml, trio-ml}. Relevant to this paper, several papers aim to reduce communication overhead using smart  scheduling techniques; e.g., Gandiva~\cite{gandiva}, Themis~\cite{themis}, Pollux~\cite{pollux}, Tiresias~\cite{tiresias}, Shockwave~\cite{shockwave},  and Optimus~\cite{optimus}. These schedulers try to minimize network sharing by placing workers of the same job as close as possible to each other.
However, these approaches do not consider interleaving the communication patterns of different training jobs when placing them on servers. \name's contribution is complementary to these approaches by considering both the compute resources and the communication demands of different jobs during scheduling. Moreover, \name is designed as a pluggable module to augment these schedulers.

\begin{figure*}[t]
    \begin{minipage}{0.6\textwidth}
    \includegraphics[width=1.0\textwidth]{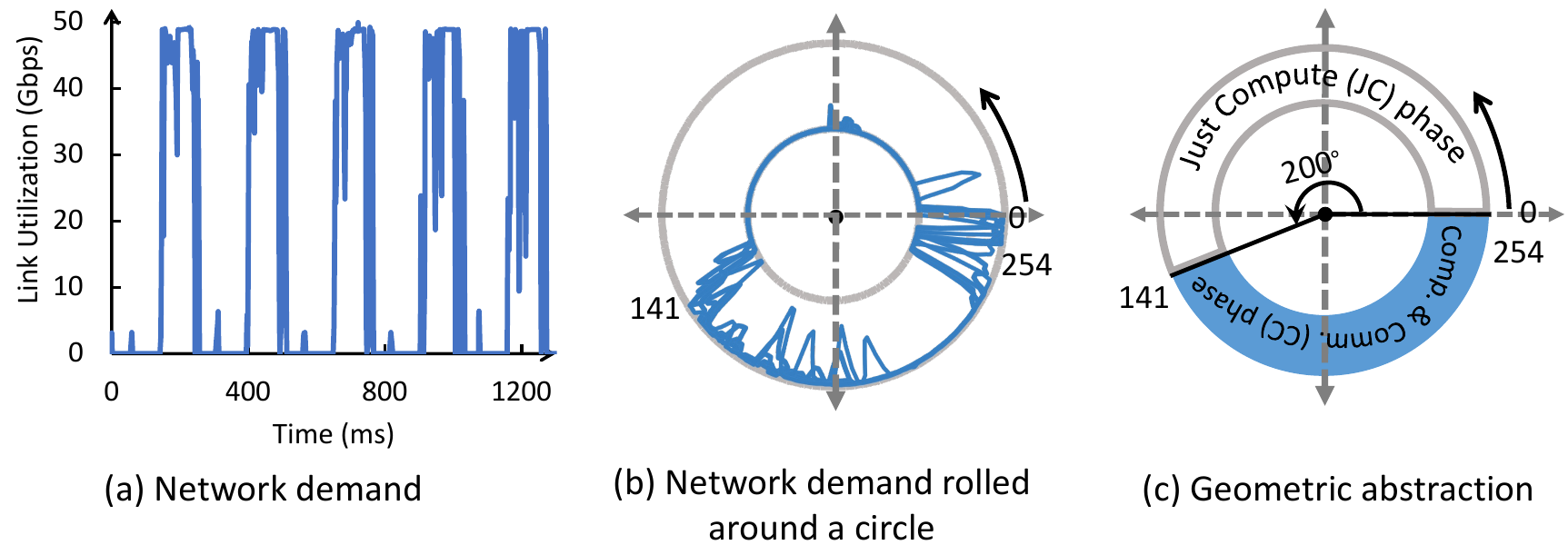}
    \caption{{\name's geometric abstraction}.}
    \label{fig:example_geometric}
    \end{minipage}
    \hspace{0.2cm}
    \begin{minipage}{0.35\textwidth}
    \includegraphics[width=1.0\textwidth]{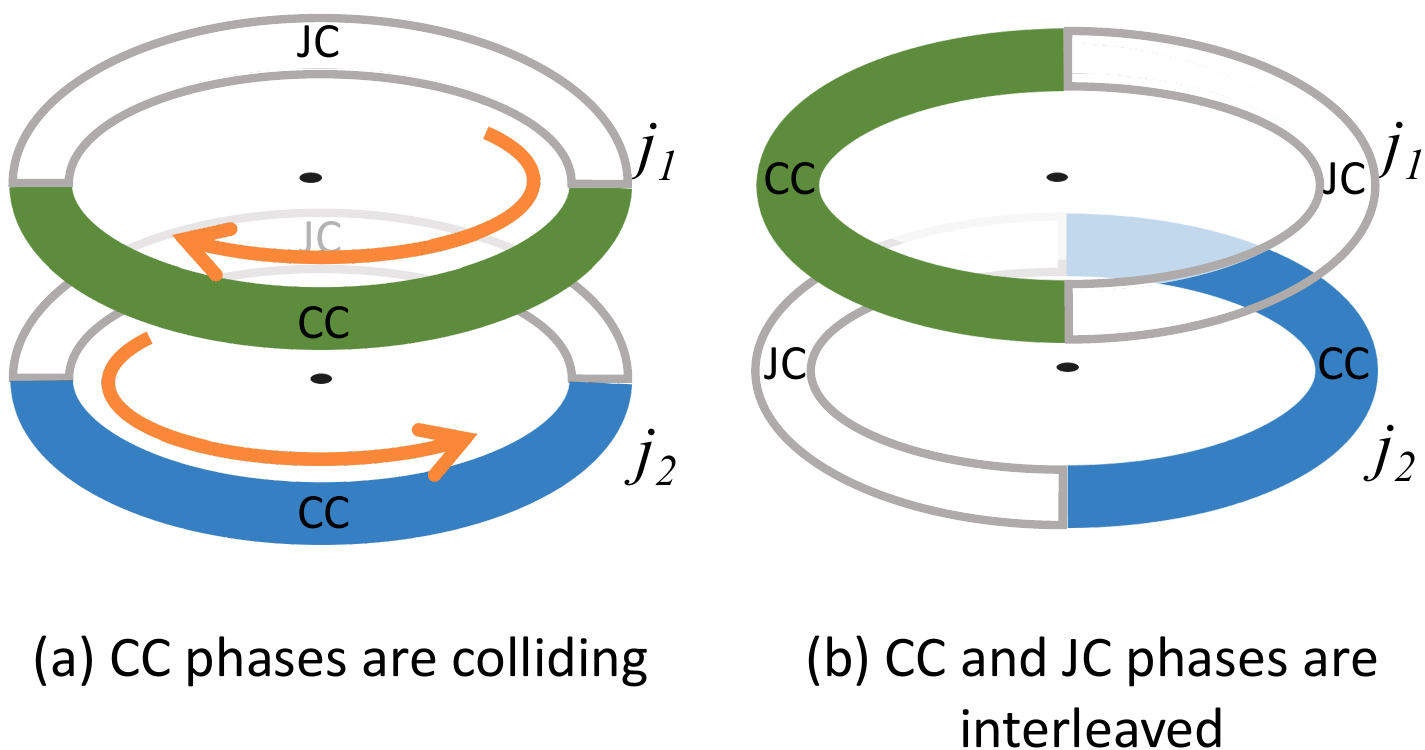}
    \caption{Rotating the circles enables interleaving the network demand of $j_1$ and $j_2$.}
    \label{fig:unfairness_to_geometric}
    \end{minipage}
\end{figure*}

\para{Multi-resource sharing.} Recently, Muri~\cite{muri} proposed a scheduling technique to interleave critical resources  (e.g., GPU, CPU, network, storage) of DNN training jobs. Muri packs jobs that are being executed \textit{on the same set of resources} into a group and interleaves their resource requirements using a Blossom-based scheduler. However, Muri's approach to resource interleaving only applies to jobs that share the same set of GPUs, CPUs, memory, and network resources.\footnote{Muri~\cite{muri} states this limitation: ``The algorithm avoids cross-group packing to minimize the packing overhead.''} Hence, Muri can interleave compute and communication phases of a set of jobs only if the jobs \textit{are sharing the same set of GPUs}. In contrast, \name is able to  interleave compute and communication phases of different jobs, irrespective of which GPUs they occupy. For instance, Muri's algorithm is not applicable to interleave the resources of $j_1$ and $j_2$ in Figure~\ref{fig:compatible_jobs}(a) because $j_1$ is distributed  between server$_1$ and server$_2$ while $j_2$ is distributed between server$_3$ and server$_4$; i.e., these two jobs do not belong to the same resource group in Muri's algorithm. Muri would have been able to interleave the resources if both $j_1$ and $j_2$ were distributed between all four servers. However, for many of the large models we use in our experiments, GPU-sharing is not possible because of the memory requirements of the model. Moreover, even with GPU sharing, in a large-scale cluster, cross-group network congestion is common. \name is able to interleave the \on and \off phases of different jobs, without requiring them to share the same set of resources. Similarly, Synergy~\cite{synergy_osdi22} proposed a multi-resource interleaving scheduling approach by inferring the sensitivity of DNN jobs to GPU, CPU, and memory resources using optimistic profiling. Synergy improves the overall cluster efficiency by performing resource-sensitive allocations instead of a GPU-proportional allocation. However, Synergy's approach does not consider the network bandwidth as a resource and is unable to interleave the communication phases with other resources. In contrast, \name's focus is on interleaving the network demand with GPU resources. \name is designed to augment both Muri and Synergy schedulers.

\para{Communication-aware scheduling.} A variety of approaches have been developed to accelerate communication among ML training workers of the same job to reduce network overhead~\cite{horovod, clopt, goyal2017accurate, baidu, blueconnect, wang2020blink, mudigere2021highperformance, sip-ml} and to enable more efficient pipelining strategies~\cite{gpipe, pipedream}. ByteScheduler~\cite{byteps_2} and Syndicate~\cite{syndicate} accelerate ML training by scheduling and optimizing the order of communication operations between different GPU servers used by a training job. ByteScheduler overlaps compute and communication operations \textit{within a training job}, while Syndicate provides a solution for planning and scheduling communication operations for large DNN training.
Similarly, TACCL~\cite{taccl}, BytePS~\cite{byteps_1}, and CLOPT~\cite{clopt} improve the communication collective algorithms of large DNN models. BytePS seeks to find a balance between the Parameter Server~\cite{osdi_parameter_server} and Ring-AllReduce algorithms for synchronizing the gradients. TACCL proposes a communication collective algorithm for training large models with data and model parallelism. CLOPT co-optimizes network topology and communication schedules for ML training. These approaches optimize communication \textit{within a training job}, however, they do not consider congestion and network sharing \textit{across training jobs}. In contrast, \name's approach is orthogonal to these techniques because \name focuses on sharing the network resources \textit{across different training jobs}.

\para{Difference with prior workshop paper.} A prior workshop paper~\cite{cassini_hotnets} introduced the idea of using a geometric abstraction to achieve job compatibility at a single-link level. We extend this workshop paper in a few important ways. First, the workshop paper considers compute/communication interleaving at a high level and does not provide a concrete scheduling technique to achieve it. Specifically, it relies on an unfair congestion control protocol to achieve interleaving but \name does not require any changes to or assumptions about the congestion control protocol. Second, the workshop paper ignores the impact of cluster-level interleaving. Third, the workshop paper only considers the data parallelism paradigm and its geometric abstraction does not generalize to model parallelism techniques. Finally, our optimization formulation, the \affinity graph abstraction, the design and implementation of the \name module, and our formal arguments around correctness (Theorem ~\ref{thm:uniqueness}), are all new contributions.

\section{Conclusion}
\name is a simple but effective approach that can integrate with existing cluster schedulers to allow them to accommodate multiple ML jobs' network needs. We introduce a novel metric, called compatibility score, to rank different GPU placements when jobs compete on network links. 
Our evaluations show that \name improves the average and tail completion time of jobs by up to 1.6$\times$ and 2.5$\times$, respectively. Moreover, we show that \name reduces the number of ECN marked packets by up to 33$\times$.

\label{bodypage}

\balance
\bibliographystyle{abbrv} 
\begin{small}
\bibliography{reference}
\end{small}

\appendix
\newpage

\section{Proof of Theorem~\ref{thm:uniqueness}}
\label{appendix:proof}

This section provides the proof of Theorem~\ref{thm:uniqueness} (Correctness and Uniqueness Guarantee) in Section~\ref{sec:design}. To prove uniqueness we need to show that Algorithm~\ref{alg:affinity} assigns a time shift value exactly once to every job $j\in J$ in the cluster with \affinity graph $G = (U,V,E)$. To prove the correctness, we need to show that:

\begin{align}
\forall l, \forall (j_{n}, j_{m}) \in  \{(j_{i}, j_{k}) & | (j_{i}, l) \in E \text{ and } (j_{k}, l) \in E\} : \nonumber \\
 (t_{j_{n}} - t_{j_{m}})\%p^l  =&  (t^{l}_{j_{n}} - t^{l}_{j_{m}})\%p^l \label{eq:correctness_contraint}
\end{align}
where $p^l$ is the perimeter of the geometric abstraction for link $l$. In other words, to guarantee correctness, we prove that for every pair of jobs sharing a link, the difference in the time-shift values assigned by the algorithm is equal to the relative time-shift given by \name's optimization formulation for that link.

We first use induction to prove that both the above statements are true for any connected and loop-free \affinity graph $G = (U_{1}, V_{1}, E_{1})$, and later we extend this to a general \affinity graph with many connected sub-graphs. \\

\textbf{Base case:} First, we show that both statements hold for a graph $G$ with only one link $l$. Algorithm~\ref{alg:affinity} first selects one of the jobs $j_{1}$ connected to the link $l$ and sets $t_{j_{1}} = 0$. Using its BFS traversal algorithm for all the other jobs $j_{i}$ connected to $l$, Algorithm~\ref{alg:affinity} sets the time shift as:
$$ t_{j_{i}} = - t^{l}_{j_{1}} + t^{l}_{j_{i}} $$
As the algorithm uses BFS and visits each job exactly once, the time-shift value is computed and assigned exactly once for each job. This ensures that for a given job, there is a unique time-shift value computed by the algorithm.

To show correctness, we need to prove equation~\ref{eq:correctness_contraint} for all job pairs connected to the link $l$. Say $j_{n}$ and $j_{m}$ are two jobs connected to the link $l$, then:

$$(t_{j_{m}} - t_{j_{n}})\%p^l = (- t^{l}_{j_{1}} + t^{l}_{j_{m}} - (- t^{l}_{j_{1}} + t^{l}_{j_{n}}))\%p^l $$
$$ = (t^{l}_{j_{m}} - t^{l}_{j_{n}})\%p^l $$
This shows that the time shift assignments are correct for the base case.

\textbf{Assumption Step:} Let us assume that the two statements hold for every connected and loop-free \affinity graph having $n$ links.

\textbf{Induction step:} We will use the above assumption to prove that the two statements hold for a connected and loop-free \affinity graph having $n+1$ links. Let $G=(U_{s}, V_{s}, E_{s})$ be the connected sub-graph with $n$ links, now we create an affinity graph with $n+1$ links, by adding a new link $l_{n}$ which is already connected to some set of jobs $J=\{j\}$. In order to get a connected and loop-free \affinity graph with $n+1$ links, $l_{n}$ has to be connected to exactly one job $j_{i}\in U_{s}$. It has to be exactly one because having an edge with more than one job from the sub-graph $G$ will form a loop, and not being connected with any of the jobs from the sub-graph $G$ will make the \affinity graph disconnected. Let $j_{i}$ be the job from subgraph $G$ that is connected to $l_{n}$. Since $j_{i}$ is the only path to reach the link $l_{n}$ and the jobs $J$ connected to the link, our algorithm~\ref{alg:affinity} will reach link $l_{n}$ through job $j_{i}$. Then from the algorithm, the time assignments for the jobs in $J$ are given by:

$$ \forall j_{k} \in J,  t_{j_{k}} = t_{j_{i}} - t^{l_{n}}_{j_{i}} + t^{l_{n}}_{j_{k}} $$

The uniqueness is guaranteed since BFS visits each job only once. From the assumption step the correctness constraints for all the links in the subgraph $G$ are assumed to be valid, we have to only prove equation~\ref{eq:correctness_contraint} for the jobs connected to $l_{n}$.

$$ \forall (j_{m}, j_{n}) \in J, (t_{j_{m}} - t_{j_{n}})\%p^l = (t_{j_{i}} - t^{l}_{j_{i}} + t^{l}_{j_{m}} - (t_{j_{i}} - t^{l}_{j_{i}} + t^{l}_{j_{n}}))\%p^l $$
$$ = (t^{l}_{j_{m}} - t^{l}_{j_{n}})\%p^l $$

This shows that both statements hold true for any \affinity graph with $n+1$ links. This concludes the induction proof, hence, Algorithm~\ref{alg:affinity} holds true for all connected and loop-free \affinity graphs.

Now we extend to an \affinity graph of a cluster having multiple connected sub-graphs. Since our algorithm solves each connected sub-graph one by one and assigns a single time-shift value for each job in the sub-graph, uniqueness is guaranteed. For correctness, since there is no edge connecting jobs and links from different disjoint sub-graphs there are no constraints across disjoint graphs that need to be checked for correctness. Hence this concludes the overall proof.

\begin{figure}[t]
    \centering
    \includegraphics[width=\columnwidth]{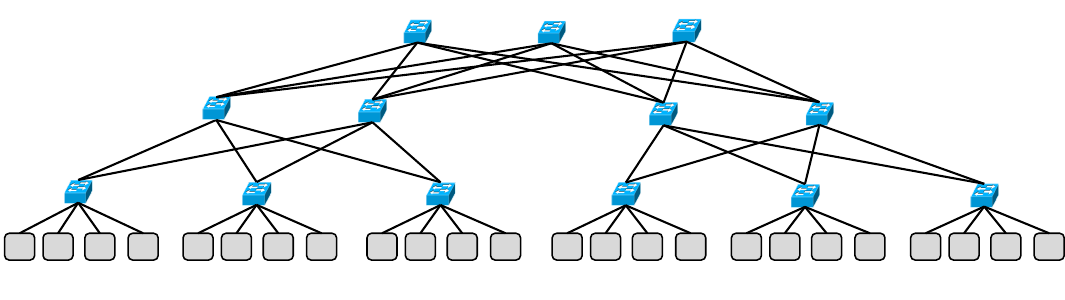}
    \caption{Logical topology of our testbed.}
    \label{fig:testbed_topology}
\end{figure}

\para{Example.} As an example, traversing the affinity graph in Figure~\ref{fig:affinity}(b) results in the following unique time-shifts for $j_1$, $j_2$, and $j_3$:

\vspace{-0.4cm}
\begin{align}
t_{j_1}=& 0~(\textit{reference point}) & \\ 
t_{j_2}=& (-t_{j_1}^{l_1} + t_{j_2}^{l_1}) & \mod{ &\hspace{1cm}\textit{iter\_time}_{j_2}} \\           & \textit{\affinity graph path:} & \textit{$j_1 \rightarrow$} &\textit{$l_1 \rightarrow j_2$} \nonumber\\
t_{j_3}=& (-t_{j_1}^{l_1} + t_{j_2}^{l_1} - t_{j_2}^{l_2} + t_{j_3}^{l_2}) & \mod{ & \hspace{1cm}\textit{iter\_time}_{j_3}}\\
         &\textit{\affinity graph path:} & \textit{$j_1 \rightarrow l_1$} & \textit{$\rightarrow j_2 \rightarrow l_2 \rightarrow j_3$}  \nonumber
\label{eq:case1}
\end{align}

For the correctness of the algorithm, the graph should be loop-free. In \name's design, we eliminate placement configurations that have loops. Themis allocates servers using an auction procedure, which involves multiple jobs in the cluster participating in the auction. This allows multiple possible placement configurations for the jobs participating in the auction. Hence it is easy to find many loop-free placement configurations among them. Similarly, Pollux reallocates resources periodically, involving multiple jobs and creating many possible placement configurations.

\section{Testbed Details}

\subsection{Topology}
\label{app:testbed_topology}

This section describes \name's testbed topology in more detail. Figure~\ref{fig:testbed_topology} illustrates the logical view of our testbed. Our testbed includes ASUS ESC4000A-E10 servers each with one A100 Nvidia GPU~\cite{a100} and one Mellanox ConnectX5 NIC. The NICs are connected to a Tofino switch. The Tofino switch emulates 13 logical switches and 48 bi-directional links for a 2:1 over-subscribed topology. 

\begin{table}
\scriptsize
\centering
\renewcommand{\arraystretch}{0.95}
\linespread{1.05}\selectfont\centering
\begin{tabular}{|p{1.9cm}|p{1.3cm}|p{1cm}|p{1.5cm}|p{0.9cm}|} 
\hline
DNN       &  Memory requirement (MB) & Batch size/GPU  & Parallelization strategy & Type \\\hline
VGG11~\cite{vgg11} &  507   &  512-1800  & Data Parallel & Vision  \\
VGG16~\cite{vgg16} &  528   &  512-1800  & Data Parallel & Vision  \\
VGG19~\cite{vgg19} &  549   &  512-1800  & Data Parallel & Vision  \\
WideResNet101~\cite{wideresnet} &   243  & 256-1200 & Data Parallel & Vision \\
ResNet50~\cite{resnet} &  98   &  256-1800  & Data Parallel & Vision  \\
BERT~\cite{bert}  &  450   &  8-32  & Data Parallel & Language \\
RoBERTa~\cite{roberta} &  800   &  8-32  & Data Parallel &Language \\
CamemBERT~\cite{camembert} &  266   &  8-32  & Data Parallel &  Language \\
XLM~\cite{xlm} &  1116   &  4-32  & Data Parallel & Language \\
GPT1~\cite{gpt_1} & 650 - 9000 &  32-80  & Model Parallel & Language \\
GPT2~\cite{gpt_2} & 1623- 27000 &  32-80  & Model Parallel & Language \\ 
GPT3~\cite{gpt_3} & 1952- 155000 &  16-48  & Model Parallel & Language \\ 
DLRM~\cite{dlrm}  &  890 - 1962  &  16-1024  & Model Parallel & Recomm. \\
\hline
\end{tabular}
\caption{DNN models used in our experiments.}
\label{tab:model_parameters}
\end{table}

\subsection{DNN Models}
\label{app:dnn_models_details}

As mentioned in Section~\ref{sec:eval_setup}, we run our experiment with 13 popular DNN models: VGG11~\cite{vgg11}, VGG16~\cite{vgg16}, VGG19~\cite{vgg19}, ResNet50~\cite{resnet}, WideResNet101~\cite{wideresnet}, BERT~\cite{bert}, RoBERTa~\cite{roberta}, XLM~\cite{xlm}, CamemBERT~\cite{camembert}, GPT1~\cite{gpt_1}, GPT2~\cite{gpt_2}, GPT3~\cite{gpt_3}, and DLRM~\cite{dlrm}. Table~\ref{tab:model_parameters} summarizes the parameters of each model and batch sizes. Note that the batch sizes are provided as a range because the number of workers and hyper-parameters change during scheduling epochs. In particular, in different experiments, we select the batch size according to the hyper-parameters used in prior work~\cite{measuring_effects_of_data_parallelism_2019, mlperf, switch-ml, themis, pollux}. The memory requirements of each model reflects the amount of memory each model occupies in the GPU memory. We adjust the model sizes for different models depending on the parallelization strategy.

In our experiments in Figure~\ref{fig:scatter_epoch}, we use different instances of the same model. We use suffixes on their names to distinguish between the instances, for example, GPT2-A and GPT2-B are two different training jobs, as shown in the legend of Figure~\ref{fig:scatter_epoch}. GPT2-A has a batch size of 24 with a model hidden size of 1536 (as defined by Deepspeed's codebase~\cite{deepspeed_gpt}), while GPT2-B has a batch size of 70 with a model hidden size of 1184. 

\section{Number of ECN Marked Packets}
\label{sec:ecn_marks}
\begin{figure}[t]
    \centering
    \includegraphics[width=\columnwidth]{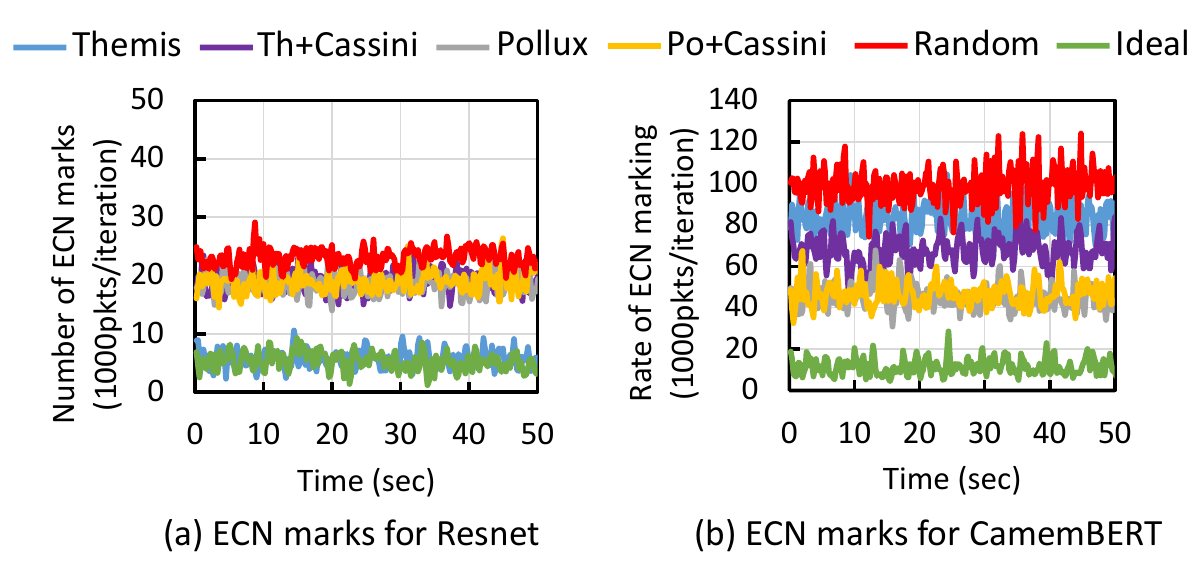}
    \caption{Number of ECN marked packets per iteration}
    \label{fig:ecn_marks_appendix}
\end{figure}

Figure~\ref{fig:ecn_marks_appendix} plots the number of ECN marked packets per iteration for the models ResNet and CamemBERT. These measurements are from the experiment of Section~\ref{sec:testbed_congestion_data_parallel}. The ResNet model has relatively lower ECN marks in general than other models because ResNet has a smaller model size and requires less network bandwidth for its AllReduce phase.

\end{document}